\setlist{noitemsep,topsep=2pt,parsep=2pt,partopsep=0pt,leftmargin=*} 
\renewcommand{\paragraph}[1]{\vspace{3pt}{\bf #1.}} 
\newcommand{\designsubsection}[1]{\subsection{#1}}
\newcommand{\figref}[1]{Figure \figrefnum{#1}}
\newcommand{\figrefnum}[1]{\ref{fig:#1}}
\newcommand{\subfigref}[2]{Figure \subfigrefnum{#1}{#2}}
\newcommand{\subfigrefnum}[2]{\figrefnum{#1}(\subref{fig:#2})}
\newcommand{\newProof}[0]{\iffalse}
	\newcommand{\highlight}[1]{\colorbox{yellow}{#1}}
	\newcommand{\checkOK}[1]{\colorbox{red!30}{#1}}
	\newcommand{\todo}[1]{{\color{red}TODO: {#1}}}
	\newcommand{\question}[1]{{\color{magenta}Question: {#1}}}
	\newcommand{\idea}[1]{{\color{magenta}IDEA: {#1}}}
	\newcommand{\CK}[1]{{\color{magenta}[CK: {#1}]}}
	\newcommand{\MD}[1]{{\color{darkgreen}[MD: {#1}]}}
	\newcommand{\SBJ}[1]{{\color{cyan}[SBJ: {#1}]}}
	\newcommand{\AS}[1]{{\color{magenta}[AS: {#1}]}}
	\newcommand{\FIXED}[1]{{\color{gray}[FIXED: {#1}]}}
	\newcommand{\forgetNow}[1]{}
		\newtheorem{definition}{\colorbox{magenta!30}{Definition}}[section]
		\newtheorem{theorem}{\colorbox{yellow!30}{Theorem}}[section]
		\newtheorem{proposition}{\colorbox{blue!30}{Proposition}}[section]
		\newtheorem{corollary}{\colorbox{red!30}{Corollary}}[section]
		\newtheorem{lemma}{\colorbox{green!30}{Lemma}}[section]
		\newtheorem{theexample}{\colorbox{brown!30}{Example}}
		\newtheorem{definition}{Definition}[section]
		\newtheorem{theorem}{Theorem}[section]
		\newtheorem{proposition}{Proposition}[section]
		\newtheorem{corollary}{Corollary}[section]
		\newtheorem{lemma}{Lemma}[section]
		\newtheorem{theexample}{Example}
	\newcommand{\checkOK}[1]{#1}
	\newcommand{\highlight}[1]{#1}
	\newcommand{\todo}[1]{}
	\newcommand{\question}[1]{}
	\newcommand{\idea}[1]{}
	\newcommand{\CK}[1]{}
	\newcommand{\MD}[1]{}
	\newcommand{\SBJ}[1]{}
	\newcommand{\AS}[1]{}
	\newcommand{\FIXED}[1]{}
	\newcommand{\forgetNow}[1]{}
	\newtheorem{definition}{Definition}[section]
	\newtheorem{theorem}{Theorem}[section]
	\newtheorem{proposition}{Proposition}[section]
	\newtheorem{corollary}{Corollary}[section]
	\newtheorem{lemma}{Lemma}[section]
	\newtheorem{theexample}{Example}
\newcommand\xqed[1]{%
  \leavevmode\unskip\penalty9999 \hbox{}\nobreak\hfill
  \quad\hbox{#1}}
\newcommand\exend{\xqed{{\large $\triangle$}}}
\newenvironment{example}{
   \begin{theexample} \bgroup \em }{
   \egroup \exend \end{theexample}}
\newenvironment{contexample}{
   \addtocounter{theexample}{-1} \begin{theexample}[continued] \bgroup \em}{
   \egroup \exend \end{theexample}}
\newenvironment{proofsketch}
{
\trivlist
    \item[%
        \hskip 10pt
        \hskip \labelsep
        {\sc Proof Sketch.}%
    ]
    \ignorespaces
}{
\if@qeded\else\qed\fi
    \endtrivlist
}
\newcommand{\pheader}[1]{\textbf{#1.}}
\newcommand{\fix}[1]{{\color{red}}}
\definecolor{darkgreen}{RGB}{0,128,0}
\definecolor{dkpink}{RGB}{200,0,100}
\definecolor{gray}{RGB}{128,128,128}
\definecolor{mygray}{rgb}{0.5,0.5,0.5}
\newcommand{\comm}[1]{}
\def\systemFullName{Multi-Version Concurrency Control with Closures\xspace}
\def\NMVCCInitials{O\xspace}
\def\NMVCC{OMVCC\xspace}
\def\Validation{Validation\xspace}
\def\Repair{Repair\xspace}
\def\systemName{MV3C\xspace}
\def\MVCC{MVCC\xspace}
\newenvironment{btHighlight}[1][]
{\begingroup\tikzset{bt@Highlight@par/.style={#1}}\begin{lrbox}{\@tempboxa}}
{\end{lrbox}\bt@HL@box[bt@Highlight@par]{\@tempboxa}\endgroup}
\newcommand\btHL[1][]{%
  \begin{btHighlight}[#1]\bgroup\aftergroup\bt@HL@endenv%
}
\def\bt@HL@endenv{%
  \end{btHighlight}%
  \egroup
}
\newcommand{\bt@HL@box}[2][]{%
  \tikz[#1]{%
    \pgfpathrectangle{\pgfpoint{1pt}{0pt}}{\pgfpoint{\wd #2}{\ht #2}}%
    \pgfusepath{use as bounding box}%
    \node[anchor=base west, fill=orange!30,outer sep=0pt,inner xsep=1pt, inner ysep=0pt, rounded corners=3pt, minimum height=\ht\strutbox+1pt,#1]{\raisebox{1pt}{\strut}\strut\usebox{#2}};
  }%
}
\lstdefinelanguage{PLSQL}%
{morekeywords={IF,START,BEGIN,ABSOLUTE,ACTION,ADD,ALLOCATE,ALTER,ARE,AS,ASSERTION,%
AT,BETWEEN,BIT_LENGTH,BOTH,BY,CASCADE,CASCADED,CASE,CAST,%
CATALOG,CHAR_LENGTH,CHARACTER_LENGTH,CLUSTER,COALESCE,%
COLLATE,COLLATION,COLUMN,CONNECT,CONNECTION,CONSTRAINT,%
CONSTRAINTS,CONVERT,CORRESPONDING,CREATE,CROSS,CURRENT_DATE,%
CURRENT_TIME,CURRENT_TIMESTAMP,CURRENT_USER,DAY,DEALLOCATE,%
DEC,DEFERRABLE,DEFERED,DESCRIBE,DESCRIPTOR,DIAGNOSTICS,%
DISCONNECT,DOMAIN,DROP,ELSE,END,EXEC,EXCEPT,EXCEPTION,EXECUTE,%
EXTERNAL,EXTRACT,FALSE,FIRST,FOREIGN,FROM,FULL,GET,GLOBAL,%
GRAPHIC,HAVING,HOUR,IDENTITY,IMMEDIATE,INDEX,INITIALLY,INNER,%
INPUT,INSENSITIVE,INSERT,INTO,INTERSECT,INTERVAL,%
ISOLATION,JOIN,KEY,LAST,LEADING,LEFT,LEVEL,LIMIT,LOCAL,LOWER,%
MATCH,MINUTE,MONTH,NAMES,NATIONAL,NATURAL,NCHAR,NEXT,NO,NOT,NULL,%
NULLIF,OCTET_LENGTH,ON,ONLY,ORDER,ORDERED,OUTER,OUTPUT,OVERLAPS,%
PAD,PARTIAL,POSITION,PREPARE,PRESERVE,PRIMARY,PRIOR,READ,%
RELATIVE,RESTRICT,REVOKE,RIGHT,ROWS,SCROLL,SECOND,SELECT,SESSION,%
SESSION_USER,SIZE,SPACE,SQLSTATE,SUBSTRING,SYSTEM_USER,%
TABLE,TEMPORARY,THEN,TIMEZONE_HOUR,%
TIMEZONE_MINUTE,TRAILING,TRANSACTION,TRANSLATE,TRANSLATION,TRIM,%
TRUE,UNIQUE,UNKNOWN,UPPER,USAGE,USING,VALUE,VALUES,%
VARGRAPHIC,VARYING,WHEN,WHERE,WRITE,YEAR,ZONE,%
AND,ASC,avg,CHECK,COMMIT,count,DECODE,DESC,DISTINCT,GROUP,IN,
LIKE,NUMBER,ROLLBACK,SUBSTR,sum,VARCHAR2,
MIN,MAX,UNION,UPDATE,
ALL,ANY,CUBE,CUBE,DEFAULT,DELETE,EXISTS,GRANT,OR,RECURSIVE,
ROLE,ROLLUP,SET,SOME,TRIGGER,VIEW},
morendkeywords={BIT,BLOB,CHAR,CHARACTER,CLOB,DATE,DECIMAL,FLOAT,
INT,INTEGER,NUMERIC,SMALLINT,TIME,TIMESTAMP,VARCHAR},
sensitive=false,
morecomment=[l]--,%
morecomment=[s]{/*}{*/},%
morestring=[d]',%
morestring=[d]",%
moredelim=**[is][\btHL]{`}{`},
moredelim=**[is][{\btHL[fill=green!30,draw=red,dashed,thin]}]{@}{@},
numbers=left,                    
numbersep=2pt,                   
numberstyle=\tiny\color{mygray}, 
}[keywords,comments,strings]%
\lstdefinelanguage{Scala}{
	keywords={abstract,case,catch,class,def,do,else,extends,false,final,finally,for,if,implicit,import,%
	match,mixin,new,null,object,override,package,lazy,private,protected,requires,return,sealed,super,this,%
	throw,trait,true,try,type,val,var,while,with,yield},
	otherkeywords={=>,<-,<\%,<:,>:,\#,@},sensitive=true,
	morecomment=[l]{//},	morecomment=[n]{/*}{*/},
	morestring=[b]",morestring=[b]',morestring=[b]""",showstringspaces=false
}
\lstdefinelanguage{ScalaXact}{
	keywords={abstract,case,catch,class,def,do,else,extends,false,final,finally,for,if,implicit,import,%
	match,mixin,new,null,object,override,package,lazy,private,protected,requires,return,sealed,super,this,%
	throw,trait,true,try,type,val,var,while,with,yield,SELECT,FROM,WHERE,EXEC,ORDER,BY,DELETE,SQL,COMMIT,WORK},
	otherkeywords={=>,<-,<\%,<:,>:,\#,@},sensitive=true,
	morecomment=[l]{//},	morecomment=[n]{/*}{*/},
	morestring=[b]",morestring=[b]',morestring=[b]""",showstringspaces=false
}
\newcommand\concat{+}
\begin{document}

\title{Repairing Conflicts among MVCC Transactions}

\numberofauthors{1}
\author{
\alignauthor 
Mohammad Dashti,
Sachin Basil John, 
Amir Shaikhha,
and Christoph Koch\\
\hspace{1cm}\\
\affaddr{DATA, \'Ecole Polytechnique F\'ed\'erale de Lausanne (EPFL), Switzerland \hspace{0.1cm} \{firstname\}.\{lastname\}@epfl.ch}
}
\date{30 January 2016}

\maketitle

\begin{abstract}
The optimistic variants of MVCC (Multi-Version Concurrency Control) avoid blocking concurrent transactions at the cost of having a validation phase. Upon failure in the validation phase, the transaction is usually aborted and restarted from scratch. The ``abort and restart'' approach becomes a performance bottleneck for the use cases with high contention objects or long running transactions. In addition, restarting from scratch creates a negative feedback loop in the system, because the system incurs additional overhead that may create even further conflicts.

In this paper, we propose a novel approach for conflict resolution in MVCC for in-memory databases. This low overhead approach summarizes the transaction programs in the form of a dependency graph. The dependency graph also contains the constructs used in the validation phase of the MVCC algorithm. Then, in the case of encountering conflicts among transactions, the conflict locations in the program are quickly detected, and the conflicting transactions are partially re-executed. This approach maximizes the reuse of the computations done in the initial execution round, and increases the transaction processing throughput.
\end{abstract}

\section{Introduction} \label{sec:intro}

Recent research proposes an optimistic \MVCC algorithm as the best fit for concurrency control in in-memory databases \cite{neumann15}. This algorithm, like its predecessors \cite{larson11,jones10}, takes the path of aborting and restarting the conflicting transactions, which is simple but sub-optimal. 
Any conflict among transactions results in more work for the concurrency control algorithm and an increase in the execution latency of the transactions. The increased latency of the transaction execution not only affects the throughput of individual transactions, but also increases the probability of having more concurrent transactions in the future. This might incur even more conflicts, forming a negative feedback loop.

The sub-optimality of the abort and restart approach becomes more significant when the number of transactions that are conflicting is high. The two factors that contribute the most to an increase in the number of conflicts are: (1) having high contention data objects that are read and updated by several concurrent transactions, and (2) having long running transactions, the lifespan of which intersects with that of many other transactions.

For the aforementioned scenarios, one could propose pessimistic concurrency control algorithms. However, it should be noted that, firstly, a pessimistic approach to concurrency control yields low performance for long running transactions, as the acquisition of a highly demanded resource by a long running transaction requires either preventing the long-running transaction from committing, or stopping almost all other transactions. Secondly, general DBMSs normally implement a specific general purpose concurrency control algorithm and are optimized for it. A pessimistic algorithm may not be an option.

In this paper, we introduce a novel multi-version timestamp ordering concurrency control algorithm, referred to as  \systemFullName (\systemName). This algorithm resolves the conflicts among concurrent transactions by only partially aborting and restarting them. The main challenges for a conflict resolution technique, such as the one in \systemName, are having: (1) a low overhead on the normal execution of transactions, as this overhead cost is paid for each transaction, regardless of encountering a conflict or not, and (2) a fast mechanism to narrow down the conflicting portions of the transactions and fixing them, as a mechanism that is slower than the abort and restart approach defeats the purpose.
The first challenge is dealt with by reducing additional book-keeping by reusing the existing concurrency control machinery. 
To address the second challenge, \systemName uses lightweight annotations on the transaction programs that help in identifying the dependencies among different operations. Using these annotations, \systemName pinpoints blocks of the program affected by the conflicts and quickly re-executes only those blocks. The annotations are added either manually by the user, or by employing static program analysis and restructuring.

The rationale for proposing \systemName is that a conflict happens only if a transaction reads some data objects from the database, which become stale by its commit attempt time. Here, the assumption is that all data modifications made by a specific transaction are invisible to the other transactions during its execution. These modifications become visible only after the critical section during which the transaction commits. Consequently, just before committing a transaction, by checking whether its data lookup operations read the most recent (committed) versions of the data objects, serializability of the execution is ensured. In addition, through associating read operations with the blocks of code that depend on them, the portion of the transaction that should be re-executed in the case of a conflict is identified quickly, and gets re-executed. 
These blocks correspond to a specific class of sub-transactions in the nested  transaction model. The boundaries of these blocks are specified by \systemName, which makes it possible to efficiently repair conflicting  transactions. This is discussed further in section \ref{sec:related:compilation}.

In particular, this paper makes the following three contributions:

\begin{enumerate}
\item An efficient conflict resolution mechanism for multi-version \checkOK{data-bases}, \systemName, which repairs conflicts instead of aborting transactions, with a minimum execution overhead. The design of this mechanism is discussed in section \ref{sec:design}.
\item A method to deal with write-write conflicts in \systemName. Unlike other optimistic MVCC algorithms, \systemName can avoid aborting the transaction prematurely when a write-write conflict is detected. This is discussed further in section \ref{sec:writewrite}.
\item A mechanism for fixing the result-set of failed queries for \systemName, which can optionally be enabled for each query in a transaction. This mechanism can boost the repair process for transactions that have long running queries. The details of this mechanism are described in section \ref{sec:reuse}.
\end{enumerate}

\noindent \pheader{Motivating examples}
There are different cases of transaction programs that benefit from \systemName. The three main categories are illustrated in Figure \ref{fig:cases}. It should be noted that a combination of these three cases can create more complicated scenarios, where \systemName is even more effective. Each case in this figure shows an instance of a transaction program starting with a {\em begin} command and finishing with a {\em commit} command. Each program consists of one or more blocks of code represented by a box. The dependencies among different blocks of code are represented by arrows. If block $Y$ depends on block $X$, there is an arrow from $X$ to $Y$. Moreover, it is assumed that each of these three instances failed during their validation phase, because of a conflict detected in its block $A$. 

\begin{figure}[htb]
\begin{center}
\leavevmode
\includegraphics[width=\columnwidth]{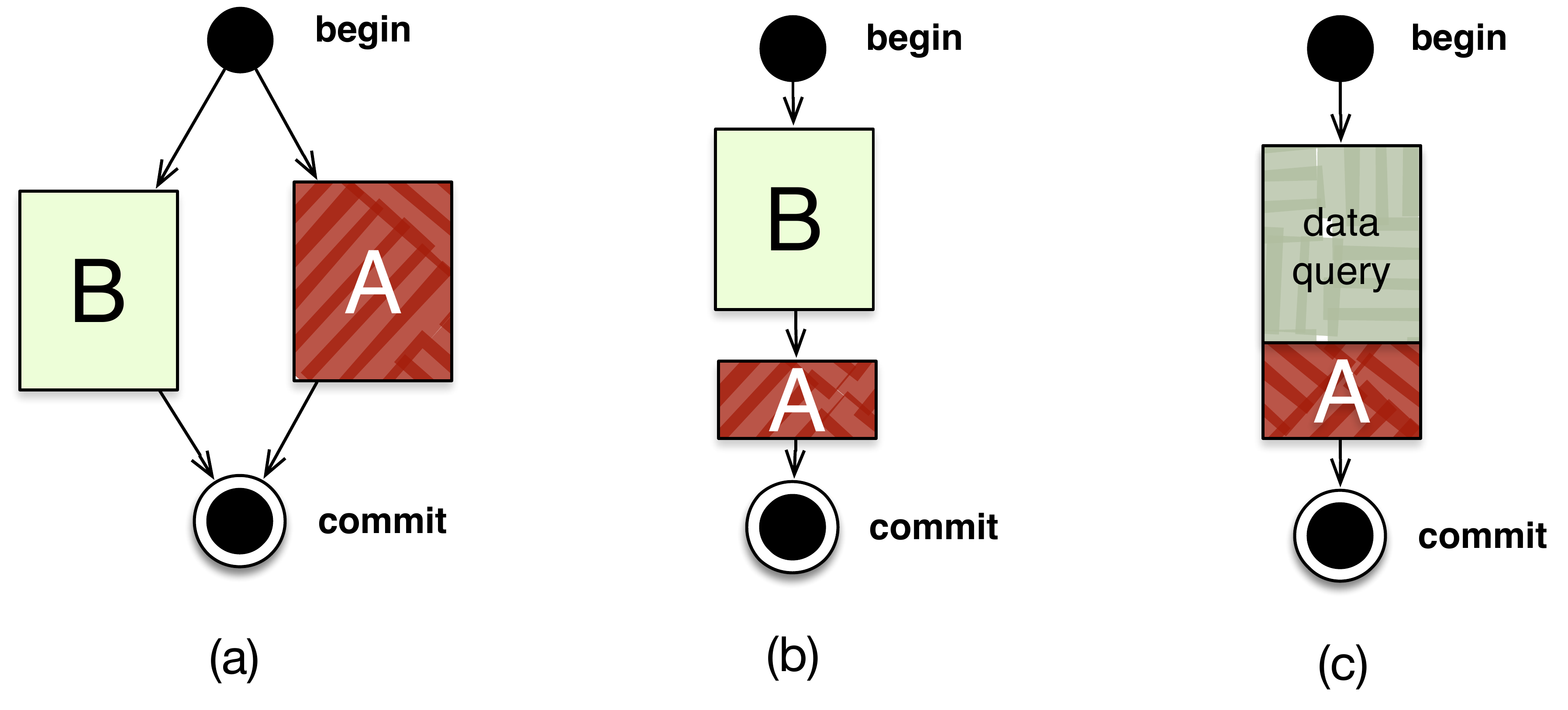}
\end{center}
\vspace{-0.2cm}
\caption{Cases where \systemName is more efficient in repairing the conflicting transactions compared to the ``abort and restart" approach.}
\label{fig:cases}	
\end{figure}

\textbf{First.} In this case, the transaction has logically disjoint program paths and conflicts happen only in a few of them. Such paths are detected from the program structure and only they are re-executed. One example of this case is shown in Figure \ref{fig:cases}(a).

\begin{example}
Assume that in Figure \ref{fig:cases}(a), block $A$  reads a row from table $T_A$ and updates it, and block $B$ reads a row from table $T_B$ and updates it. Moreover, these updates only depend on the input parameters of the program. Then, if a concurrent transaction also updates the same row from table $T_A$ and commits before the other transaction, the latter fails to commit. However, \systemName detects that only block $A$ has a conflict and re-executes only this block, without re-executing block $B$.
\end{example}

\textbf{Second.} In this case, conflicts happen after doing a substantial amount of work in the transaction. Here, it is not necessary to redo all the work. Instead, the data available before the conflict is reused in order to continue from the point of conflict. Figure \ref{fig:cases}(b) shows this case, where only the conflicting block $A$ is re-executed. 
A concrete example of this case is the banking example described in Example \ref{ex:banking}. In the following sections, this example is used in order to better describe \systemName.

\begin{example} \textbf{Banking example:} \label{ex:banking}
The example consists of a simplified banking database with an Account table. This table stores the balance of the customers identified by an ID. There are two types of transaction programs that run on this database. The first program, named SumAll, is read-only. A SumAll transaction sums up the balances in all the existing accounts. The second program is named TransferMoney and it is written in a PL/SQL-like language as shown in Figure \ref{fig:example1}. A TransferMoney transaction transfers a specific amount of money from one account to the other, given the availability of sufficient funds. The money transfer also consists of a fee that is deducted from the sender account and is added to the central fee account identified by $FEE\_ACC\_ID$.

Now, assume that two TransferMoney transactions, using different input parameters, run concurrently. Then, the first one succeeds, and the other one fails due to line \ref{fig:example1:updateFeeLine} in Figure \ref{fig:example1}. \systemName detects that only the operation that reads the current value of the fee account impacts the correctness of line \ref{fig:example1:updateFeeLine}. Thus, only that line gets re-executed, this time, with the new value of the fee account. 
\end{example}

\textbf{Third.} In this case, conflicts occur in the beginning of the transaction. Then, the data returned by SELECT queries to the database is reused after accommodating the changes introduced by the conflicting transaction(s). Thus, the re-evaluation of queries from \checkOK{scrat-ch} is avoided. As illustrated in Figure \ref{fig:cases}(c), even though the transaction program consists of a single block of code, the initial part of the block responsible for querying the data from the database is re-executed more efficiently under \systemName.

\begin{contexample}
In the banking example, assume that there is another transaction program, named Bonus. This transaction program increments the balance of the accounts with a minimum balance of 500 CHF by 1 CHF. As the {\em balance} column is not indexed, a Bonus transaction has to scan the whole Account table. Meanwhile, a concurrent TransferMoney transaction commits, increasing the balance of an account above 500 CHF. Then, the Bonus transaction fails validation, as it did not consider the new record in the Account table. However, as \systemName knows that the conflict happened only because of that record, it fixes the result-set of the query by including the additional record. This completely avoids another round of full scan over the Account table.
\end{contexample}

\section{Related work} \label{sec:related}

Transaction processing is a fundamental area of database research. Work in this area has resulted in many publications and books \cite{bernstein86, gray92, weikum01}. This section covers work done on topics related to \systemName. such as multi-version concurrency control and nested transactions.

\subsection{Multi-Version Concurrency Control} \label{sec:related:mvcc}
Concurrency control techniques from the MVCC family (which includes snapshot isolation) are de facto standard in open source as well as commercial database management and transaction processing systems. PostgreSQL \cite{ports12}, Microsoft SQL Server's Hekaton  \cite{hekaton}, SAP HANA \cite{sap} and HyPer \cite{kemper11} are among these systems. In addition, there is recent work proposing efficient MVCC algorithms for in-memory transaction processing \cite{larson11, merrifield13, neumann15}.
As it was mentioned in section \ref{sec:intro}, \systemName has an efficient conflict resolution mechanism, which is missing in the existing MVCC algorithms. \systemName builds this mechanism on top of the algorithm proposed in \cite{neumann15}. Throughout the rest of this paper, this algorithm is referred to as \NMVCC, where \NMVCCInitials stands for {\em Optimistic}. The details regarding \NMVCC are presented in section \ref{sec:overview}.

\subsection{Nested transactions} \label{sec:related:compilation}
In the nested transaction model \cite{aspnes88, moss85, fekete90}, a transaction consists of primitive actions or sub-transactions, which can again be nested transactions. In this model, after detecting a conflict inside a sub-transaction, only the sub-transaction is aborted, the state at its start time is recreated, and it is re-executed.
The boundaries of such sub-transactions are user-defined, and checkpointing is the usual technique used for recreating the state.

Irrespective of the operations done in the sub-transaction, checkpointing has to be pessimistic. A checkpoint can be used anywhere within the program and no assumptions about the program can be made during its creation. In addition, the number of extra computation cycles required for checkpointing is not negligible. It requires capturing the whole execution state, and this work is not beneficial to the execution of the main transaction program. Moreover, all work done in a \highlight{conflicting} sub-transaction is lost, and cannot be reused in the subsequent re-execution of the sub-transaction.

\systemName achieves a higher throughput compared to the generic nested transaction model due to some key differences.
\systemName has stricter regulations that do not allow an arbitrary splitting of programs. Instead, the boundaries are defined based on the possible failure points in the program and the dependencies among the operations. Consequently,
a faster and more compact checkpointing is achieved, as \systemName tailors the checkpoint for each sub-transaction specifically, unlike the generic checkpointing used in the nested transaction model.
Also, \systemName does not commit its so called sub-transactions, and only tries to commit the transaction as a whole. Thus, \systemName incurs low overhead for executing transactions while still being able to efficiently repair conflicting transactions. This is possible only because of the  constraints in the definition of boundaries for the so-called sub-transactions in \systemName.

\section{\systemName Design} \label{sec:design}
The main idea behind \systemName is that by annotating the transaction programs and thereby exposing the program dependencies to the concurrency control algorithm, conflicts among concurrent transactions can be resolved efficiently. This information is already provided to the transaction processing system via user defined transaction programs \cite{thomson12, yan16}. However, an abstract view of the program is needed to exploit this dependency information. In this abstract view, correctness checks are associated with blocks of code. Then, each executed instance of the transaction program, referred to as a {\em transaction}, uses this information in order to recover efficiently from a failure due to a conflict.

For this purpose, the possible failure points are identified and the transaction program is partitioned into smaller fragments such that each \highlight{failure} is contained within a single fragment. These program fragments are independent, and the failure of some fragments do not affect the other fragments in any way. The failure possibility stems from having a predicate in that fragment of the program, which might not pass the validation phase. After a failed validation, a new timestamp is assigned to the transaction, but only those \highlight{fragments} that have an invalid predicate inside them are rolled back and re-executed. The validation semantics guarantees that the other predicates would return the same values as the initial execution, and therefore re-executing them is unnecessary.

In the rest of this section, we describe the design of \systemName. A brief overview of \NMVCC is provided before going into the details of \systemName, as the latter borrows some features from the former.

\begin{figure}
\lstset{escapeinside={(*@}{@*)}}
\begin{lstlisting}[language=PLSQL]
/* fm = from, acc = account and bal = balance */
TransferMoney(fm_acc, to_acc, amount) {
 START;
 SELECT bal INTO :fm_bal FROM `Account WHERE id=:fm_acc`;
 
 IF(amount < 100) fee = 1.0;
 ELSE fee = amount * 0.01;
 
 IF(fm_bal > amount+fee) {
	SELECT bal INTO :to_bal FROM `Account WHERE id=:to_acc`;

	fm_bal_final = fm_bal - (amount + fee);
	to_bal_final = to_bal + amount;

	UPDATE `Account` SET bal=:fm_bal_final `WHERE id=:fm_acc`;
	UPDATE `Account` SET bal=:to_bal_final `WHERE id=:to_acc`;
	UPDATE `Account` SET bal=bal+:fee `WHERE id=:FEE_ACC_ID`; (*@\label{fig:example1:updateFeeLine}@*)
  COMMIT;
 } ELSE ROLLBACK;
}
\end{lstlisting}
\caption{TransferMoney transaction program from the banking example (Example \ref{ex:banking}) in a PL/SQL-like language.}
\label{fig:example1}
\end{figure}

\designsubsection{\NMVCC overview} \label{sec:overview}
\NMVCC gathers only predicates for all the read operations of a transaction. A predicate in \NMVCC can be thought of as a logical predicate created using the attributes of a relation, encapsulating a data selection criterion. For example, the WHERE clause in a SELECT statement on a single table is a predicate for that table. The candidate predicates in our example program are highlighted in Figure \ref{fig:example1}. Moreover, \NMVCC makes a reasonable assumption that every transaction writes into only a limited number of data objects. Therefore, it is feasible to keep track of the write-set of a transaction (i.e., the {\em undo buffer} in \NMVCC) during its execution. While updating the value of a data object, a new version is created for it by the transaction. The same notion of {\em version} is used in \systemName, and it is defined below.

\begin{definition} \label{def:dv}
A \textbf{version} is a quadruple $(T,O,A,N)$, where $T$ is the commit timestamp of the transaction that created the version, or the transaction ID if the transaction is not committed, $O$ is the identifier for the associated data object, $A$ is the value of $O$ maintained in this version, and $N$ is a version identifier if more than one version is written by $T$ for $O$.
\end{definition}

Each data object keeps a list of versions belonging to it, called its {\em version chain}. When a version is created for a data object, it is added to the head of  this chain. In addition, as an optimization for data storage and retrieval, the value in the new version is written directly into the data object itself, and the old value is stored in the newly created version. This version storage technique is used in \NMVCC in order to improve the scan performance, as it avoids some pointer chasing. Definition \ref{def:dv} holds even with this storage optimization, though the value for each version is not physically stored in the version itself, but in a fixed and deterministic offset from it. If a version $V$ is in the head of version chain, its actual value is in the data object itself. Otherwise, the newer version of $V$ is holding the actual value that belongs to $V$.

The value written in a version is immutable and cannot be modified, even by its owner transaction. Thus, given a version $V$, the version identifier $N$ in $V$ is used to differentiate distinct versions written by the same transaction for the same data object. However, after a transaction gets committed, only the newest version written by the transaction becomes visible to the other transactions. In practice, $N$ can be a pair of pointers, pointing to the older and newer versions in the internal chain of versions written by a single transaction. Then, the committed version is the one without a newer version in $N$. The notion of a committed version is defined below.

\begin{definition} \label{def:cdv}
A \textbf{committed version} is a triple $(T,O,A)$, where $(T,O,A,N)$ is a version such that N is the identifier of the latest version written for $O$ by the transaction with commit timestamp $T$.
\end{definition}

The read operations in \NMVCC return the value inside the data object itself if the version chain is empty. Otherwise, the visible value is reconstructed from the value inside the data object by applying the changes from the version chain, until a visible version is reached. A visible version is either owned by the transaction itself or the latest committed version before the transaction started.

\begin{definition} \label{def:vdv}
\textbf{Visible version:} a version $(T_1,O,A,N_1)$ is visible to a transaction with start timestamp $T_2$ if either:
\begin{itemize}
  \item $T_1$ is committed, $T_1 < T_2$ and there is no other version $(T_3,O,\_,\_)$ where $T_1 < T_3 <T_2$, or
  \item $T_1=T_2$ and there is no other version $(T_1,O,\_,N_2)$ where $N_2$ is newer than $N_1$.
\end{itemize}
\end{definition}

\systemName transactions, like those in \NMVCC, have an undo buffer that maintains the list of versions created by the transaction. When a transaction gets committed, its undo buffer contains only the committed versions. As the following proposition suggests, the undo buffer is a representative of the effects of a committed transaction.

\begin{proposition} \label{prp:undobuffer}
The only effects of a committed transaction that are visible to the other transactions are the committed versions in its undo buffer.
\end{proposition}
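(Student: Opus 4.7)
The plan is to establish two inclusions: (i) every committed version in the undo buffer is an effect visible to some other transaction, and (ii) every effect of the committed transaction that another transaction can observe is realised through one of these committed versions. Since in the \NMVCC/\systemName discipline a transaction interacts with shared state only through the version chains of data objects, this reduces to an enumeration of cases over the operations a transaction can perform, together with a direct appeal to Definitions \ref{def:dv}--\ref{def:vdv}.

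First I would classify the operations of a transaction by their externally observable consequence. Read operations only populate transaction-local variables and the local result set, contributing no state that persists after the transaction terminates. Internal control flow, local arithmetic, and accesses to the undo buffer are likewise invisible. The only operations that touch shared state are writes, each of which creates a version $(T,O,A,N)$ inserted at the head of $O$'s version chain and recorded in the undo buffer. Hence after commit the shared-state footprint of the transaction is exactly the set of versions it produced, and the remark immediately preceding Definition \ref{def:cdv}---that only the newest version written by the transaction for each $O$ becomes visible to others---further restricts this footprint to the committed versions in the sense of Definition \ref{def:cdv}.

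For direction (i), I would exhibit a witness: take any transaction with start timestamp $T_2 > T$ that begins before another committed transaction writes a newer version of $O$; by Definition \ref{def:vdv}, the triple derived from the committed version $(T,O,A)$ is precisely its visible version for $O$, so the corresponding read returns $A$. For direction (ii), I would argue the contrapositive: any value that another transaction $T_2$ observes for an object $O$ touched by $T$ is obtained by walking $O$'s version chain according to Definition \ref{def:vdv}. If the walk terminates at a version authored by $T$, the first bullet of Definition \ref{def:vdv} forces $T$ to be committed with $T < T_2$, and the clause ``no other version $(T,O,\_,N_2)$ with $N_2$ newer'' picks out the latest version $T$ wrote for $O$---precisely a committed version per Definition \ref{def:cdv}.

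The main obstacle is ruling out hidden side channels, in particular the intermediate, non-committed versions that $T$ may have written for the same object and which still sit in the undo buffer. Here I would lean on the in-place storage discipline spelled out after Definition \ref{def:dv}: newer versions subsume older ones in the chain, and the two bullets of Definition \ref{def:vdv} together ensure that only the head-of-$T$ version for each $O$ is ever returned to another transaction's read. A small but essential sub-step is to extend the ``invisible during execution'' assumption from Section \ref{sec:intro} to post-commit intermediate versions: these never acquire committed-version status in the sense of Definition \ref{def:cdv}, so they fail the first bullet of Definition \ref{def:vdv} for every transaction whose start timestamp exceeds $T$, and remain unobservable. Combining the two inclusions yields the proposition.
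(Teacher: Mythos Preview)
Your argument is substantially more detailed than what the paper offers: the paper's own justification is a two-sentence sketch that defers to \cite{neumann15} and then observes that the undo buffer holds exactly the versions a transaction created, so the committed database state is the union of committed versions. You instead give a self-contained case analysis over operation types and a two-inclusion argument grounded in Definitions \ref{def:dv}--\ref{def:vdv}. That is a legitimate and more informative route; it makes explicit why reads and local computation leave no shared-state footprint, which the paper simply takes for granted.

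One point deserves tightening. In your final paragraph you argue that intermediate (non-latest) versions written by $T$ ``fail the first bullet of Definition \ref{def:vdv}'' because they lack committed-version status in the sense of Definition \ref{def:cdv}. That inference does not follow from the text of the first bullet: the condition there is only that the \emph{timestamp} $T_1$ is committed and that no version with timestamp strictly between $T_1$ and $T_2$ exists---it never mentions $N$ or Definition \ref{def:cdv}. Read literally, an intermediate version $(T,O,A',N')$ with $N'$ not latest would still satisfy that bullet. The actual mechanism that hides intermediate versions is the operational read procedure (walking the chain and stopping at the first qualifying entry) together with the paper's explicit stipulation, stated just before Definition \ref{def:cdv}, that ``after a transaction gets committed, only the newest version written by the transaction becomes visible to the other transactions.'' You should appeal to that sentence (or to the chain-walk semantics) directly rather than to a non-existent link between Definitions \ref{def:cdv} and \ref{def:vdv}. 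With that correction, your argument goes through.
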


\begin{proofsketch} 
The above proposition is indirectly proved in \cite{neumann15}. Every transaction keeps a reference to the versions it created in its undo buffer. The proposition follows immediately as the committed state of a multi-version database consists of the committed versions created by the transactions executed on the database.
\end{proofsketch}

Moreover, to capture the changes done to the database between two timestamps, it is sufficient to look at all the transactions that committed during this time period, as the following corollary shows.

\begin{corollary} \label{cor:databasechanges}
The changes to a multi-version database between two timestamps $S$ and $C$ are represented by the versions in the undo buffer of the transactions committed between $S$ and $C$.
\end{corollary}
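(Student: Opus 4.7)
The plan is to derive the corollary directly from Proposition \ref{prp:undobuffer} by a simple coverage argument over data objects, using the visibility semantics given in Definition \ref{def:vdv}. First I would fix an arbitrary data object $O$ and compare its visible value at the two timestamps $S$ and $C$. By Definition \ref{def:vdv}, the value of $O$ visible to a hypothetical reader at timestamp $T$ is determined by the latest committed version $(T',O,A)$ with $T' \leq T$ (or the base value in the data object if no such version exists). Hence the visible value at $S$ and at $C$ can differ only if there exists a committed version $(T',O,A)$ with $S < T' \leq C$, i.e.\ only if some transaction with commit timestamp in $(S,C]$ created a committed version of $O$.

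Next I would invoke Proposition \ref{prp:undobuffer} to place that version in the right container: the only effects of a committed transaction visible to others are the committed versions held in its undo buffer, so every such $(T',O,A)$ lies in the undo buffer of the transaction with commit timestamp $T'$. Ranging $O$ over all data objects, the collection of changes between $S$ and $C$ is therefore entirely contained in the union of the undo buffers of transactions that committed in $(S,C]$. For the reverse inclusion I would note that each committed version in such an undo buffer was, by Definition \ref{def:cdv}, the latest version produced by its transaction for some object $O$ and thus becomes the new visible value for $O$ at commit time $T' \in (S,C]$; any reader using a timestamp $\geq T'$ and $\leq C$ sees this change (unless it is shadowed by a still later committed version in the same interval, which is itself another element of the same union and so is also accounted for).

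The only step that requires a little care is handling multiple overwrites of the same $O$ within $(S,C]$: several committed versions for the same object may appear in the undo buffers of distinct transactions in the interval, but this causes no double-counting because Definition \ref{def:vdv} selects the unique most recent one, and all intermediate versions are also legitimately ``changes that occurred'' in the interval. I expect this small bookkeeping about the ordering of committed versions within $(S,C]$ to be the main obstacle; once it is stated precisely, the corollary is an immediate consequence of Proposition \ref{prp:undobuffer}.
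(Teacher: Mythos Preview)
Your argument is correct and, in fact, considerably more detailed than the paper's own treatment: the paper states Corollary~\ref{cor:databasechanges} as an immediate consequence of Proposition~\ref{prp:undobuffer} and gives no separate proof at all. Your two-inclusion argument via Definition~\ref{def:vdv}, together with the bookkeeping about multiple overwrites of the same object in $(S,C]$, is a faithful unpacking of why the corollary follows, and nothing in it is wrong or superfluous; it simply makes explicit what the paper leaves implicit.
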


As \NMVCC is an optimistic algorithm, it requires a validation phase before a successful commit. A variant of {\em precision locking} \cite{jordan81} is used for validating transactions. This approach requires that the result-sets for all the read operations are still valid at commit attempt time, as if the operations were done at that time. This creates an illusion that the whole transaction is executed at commit time.

In order to achieve this, all committed transactions are stored in a list called {\em recently committed} transactions. During the validation of a transaction, all of its predicates are checked against all committed versions of the transactions in the recently committed list. From this list, only those transactions that committed during the lifetime of the one being validated are considered. Based on Corollary \ref{cor:databasechanges}, the undo buffers of these transactions contain the changes of this time period. If any committed version matches against a predicate, then the data read by the transaction is obsolete. The newer version should have been read instead, if the read operation used the commit timestamp as its reference point.  In this case, the transaction fails validation, is rolled back and restarted.

\designsubsection{\systemName machinery} \label{sec:machinery}
\systemName creates new versions for each modification of the data objects, and gathers the predicates used during the execution. In addition, each predicate has a closure bound to it. The term {\em closure} is used in the sense of the term from the programming languages literature \cite{Landin64}.  In the rest of this paper, the term {\em predicate} refers to an {\em \systemName predicate}, unless otherwise stated.

\begin{definition} \label{def:predicate}
An \textbf{\systemName predicate} $X$ consists of (1) a data selection criterion, (2) a closure bound to it, represented by C(X), (3) a list of versions registered to it, represented by $V(X)$, and (4) a list of child predicates, represented by $D(X)$. $V(X)$ and $D(X)$ are populated after $C(X)$ gets executed.
\end{definition}

\begin{definition} \label{def:closure}
$\mathbf{C(X)}$ (\textbf{the closure bound to predicate $\mathbf{X}$}) is a deterministic function that encloses all operations in a transaction program that depend on the result of $X$. $C(X)$ receives the result of evaluating $X$ along with a set of immutable context variables as its parameters.
\end{definition}

A closure can contain data selection or data manipulation operations apart from computations. Each data selection operation creates a new predicate with its own closure, which has access to its enclosing predicates and their result-sets. Moreover, each data manipulation operation  (i.e., insert, delete and update statements) creates a new version for the data object.

\systemName requires annotating the transaction program in order to expose the relationships of the predicates to the versions and to the other predicates. These annotations can be added to the transaction program automatically using program analysis techniques. It is important to notice that the dependency annotations are non-intrusive constructs, and only a simple dependency analysis is required to generate na\"{i}ve pessimistic annotations on a program written in a PL/SQL-like language. One such na\"{i}ve annotation can be derived by assuming that each operation depends on all its previous operations. Moreover, program restructuring can be used along with a detailed program analysis in order to generate more accurate annotations. However, in this paper, we assume that these annotations are given along with the transaction programs, and we leave the automatic program analysis and restructuring as future work.

\begin{figure}[htb]
\begin{center}
\leavevmode
\includegraphics[width=0.9\columnwidth]{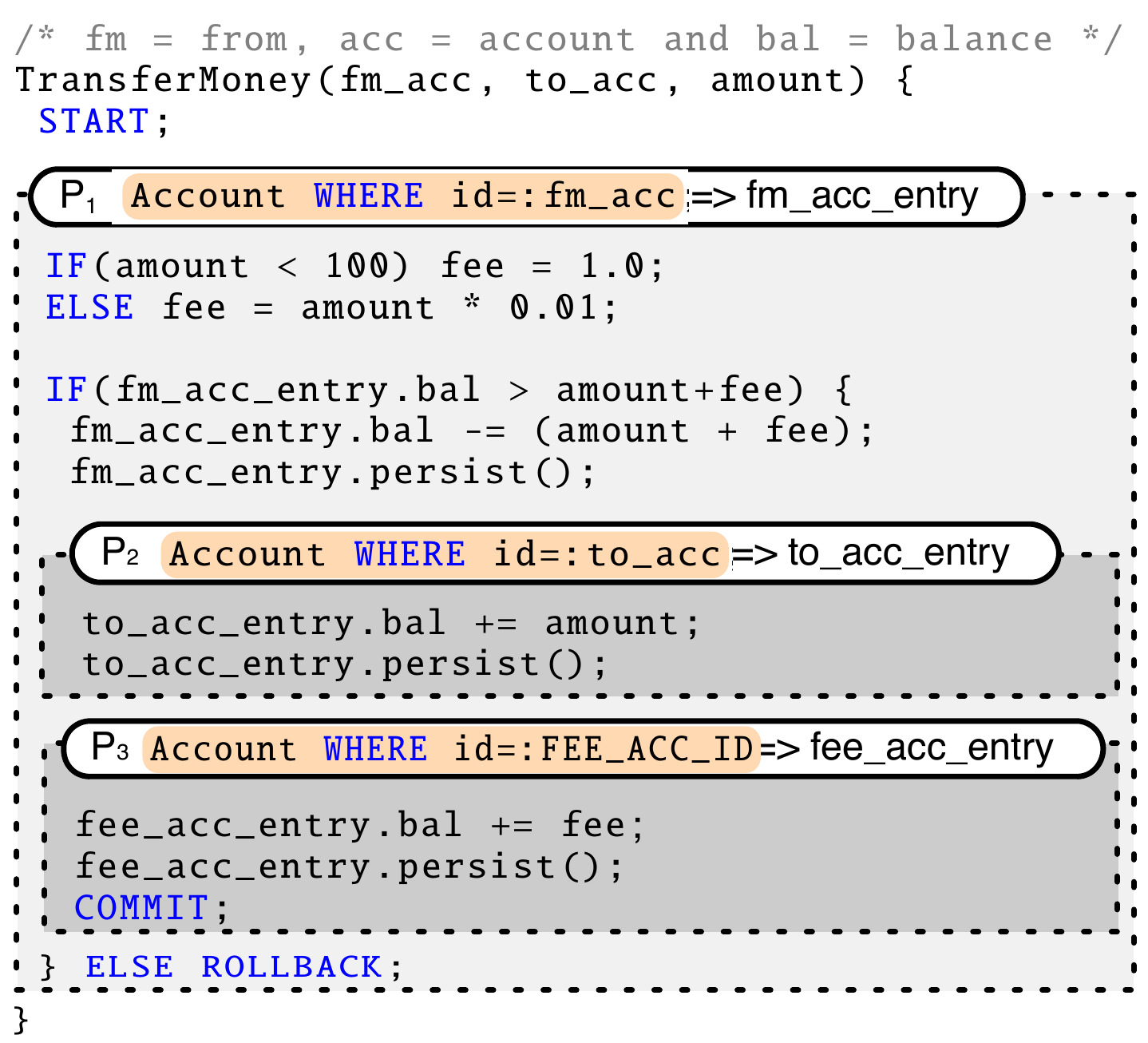}
\end{center}
\vspace{-0.2cm}
\caption{An example transaction program in the \systemName DSL.}
\label{fig:example1_translated}
\end{figure}

To specify the relationships among predicates, when a predicate is created in the closure $C(X)$ of some predicate $X$, it is added to $D(X)$, the list of child predicates of $X$. A graph is constructed internally during the execution, using this parent-child relationship information. By definition, the predicates form a Directed Acyclic Graph (DAG), because there cannot exist any edge from a new predicate to an older predicate and hence a cycle cannot be formed. More intuitively, a SELECT statement cannot get executed if it requires a parameter that is available only after executing another SELECT statement in the future. Consequently, the result of executing a transaction program in \systemName is  a DAG of predicates with a closure assigned to each predicate. The DAG for a transaction $T$ is called the {\em predicate graph of $T$}. In addition, to specify the relationships between predicates and versions, references to the newly created versions are stored into  $V(X)$, the list of versions registered to $X$. Then, using these references, $X$ keeps track of the created versions that directly depend on its result-set.

Apart from the data objects that are altered in the database, there can also be mutable variables in a transaction program. Each mutable variable is local to a closure, as sharing it among more than one closure makes it impossible to reason about the state of the variable, if any of these closures is re-executed. However, an immutable copy of the variables defined in a closure is shared with the closures of its child predicates.

For $C(X)$, the context variables are those variables that are defined outside the closure and are accessible to it. The context variables are immutable and are categorized into: (1) the input parameters of the transaction, (2) the result-sets of the ancestor predicates of $X$, and (3) the variables defined in the closures bound to the ancestor predicates of $X$. For (2) and (3), \systemName guarantees that if either the result-sets of the ancestor predicates or the variables defined in their closures change, operations of $C(X)$ are undone.

\begin{contexample}
The annotated program of Figure \ref{fig:example1} is illustrated in Figure \ref{fig:example1_translated}. It consists of the predicates and their dependent closures. In Figure \ref{fig:example1_translated}, there are three predicates, $P_1$, $P_2$ and $P_3$, where $P_2$ and $P_3$ depend on $P_1$, as they are defined inside the closure of predicate $P_1$. Each predicate in this example has a closure that is represented as a gray box underneath it. The arrow symbol ($=>$) in each predicate represents the execution of the predicate, which returns the result of the evaluation, i.e., a copy of the latest visible version of the requested data object. Then, the returned version is stored in a variable, which is used inside the closure that is bound to the predicate. Inside the closure, the fields of the returned version are accessed and modified. In the case of a modification, a call to the {\em persist} function is necessary in order to store the changes into a new version inside the database, and adding it to the undo buffer of the transaction, as well as the list of versions of the predicate.
\end{contexample}

Predicates are first class citizens  in \systemName. Various types of predicates can be defined, all of which implement the predicate interface comprising of an {\em execute} function and a {\em match} operation.
The execute function accepts a closure as its parameter and binds it to the predicate. This function evaluates the predicate by collecting the visible versions that satisfy the data selection criterion into the result-set of the predicate. Then, the closure gets executed with the result-set and the set of immutable context variables as its parameters. In addition,  the match operation of a predicate checks whether a given version satisfies the selection criterion. The match operation is used during the validation phase of a transaction. Details regarding validation are described in section \ref{sec:validation}. The lifecycle of an \systemName transaction is shown in Figure \ref{fig:predicate_lifecycle}. The details regarding the execution, validation and repair phases for a failed transaction are stated later in this section.

\begin{figure}[htb]
\begin{center}
\leavevmode
\includegraphics[width=0.9\columnwidth]{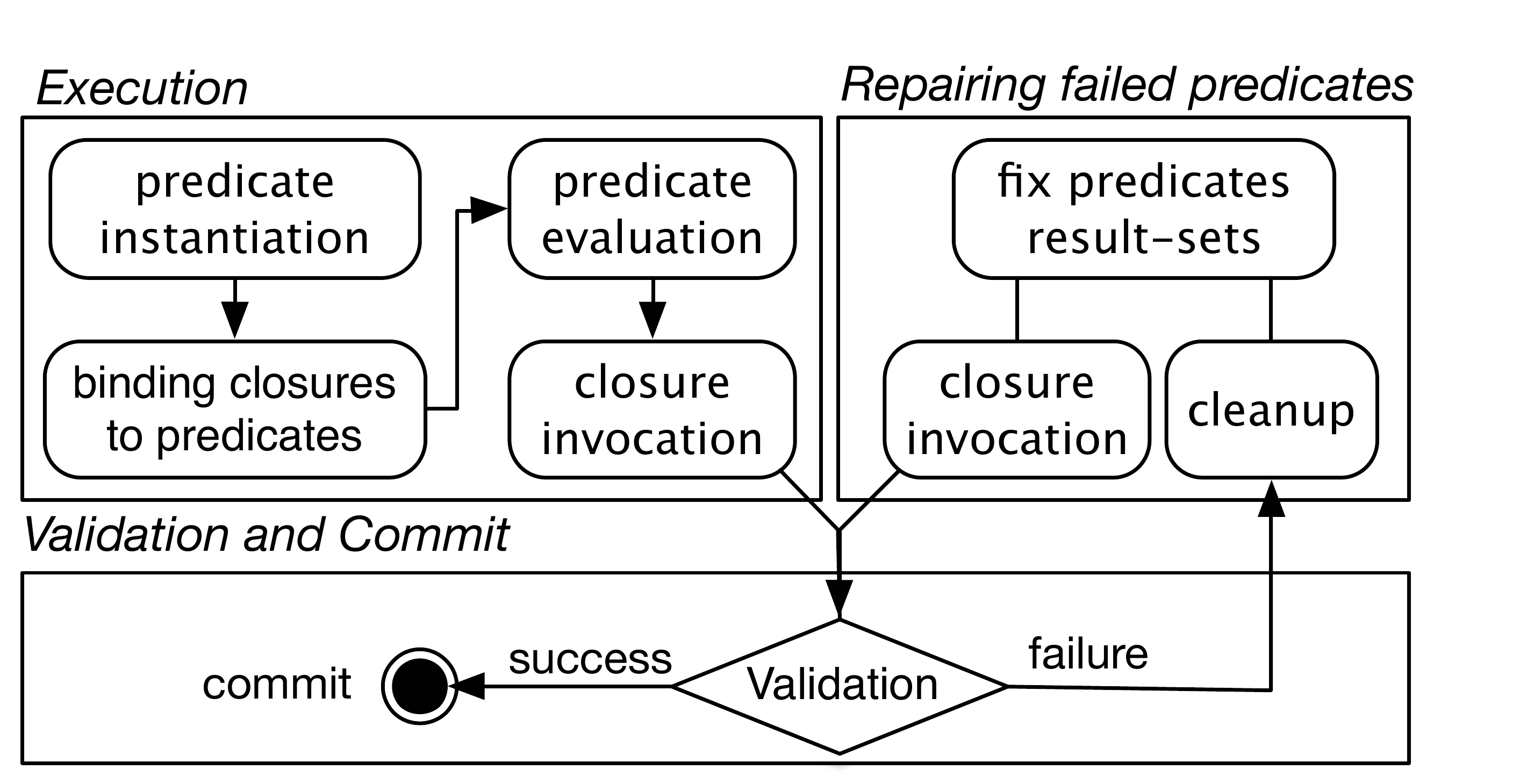}
\end{center}
\vspace{-0.2cm}
\caption{The life cycle of a transaction running under \systemName.}
\label{fig:predicate_lifecycle}
\end{figure}

\designsubsection{\systemName execution}

\begin{figure*}[htb]
\begin{center}
\leavevmode
\includegraphics[width=0.9\textwidth]{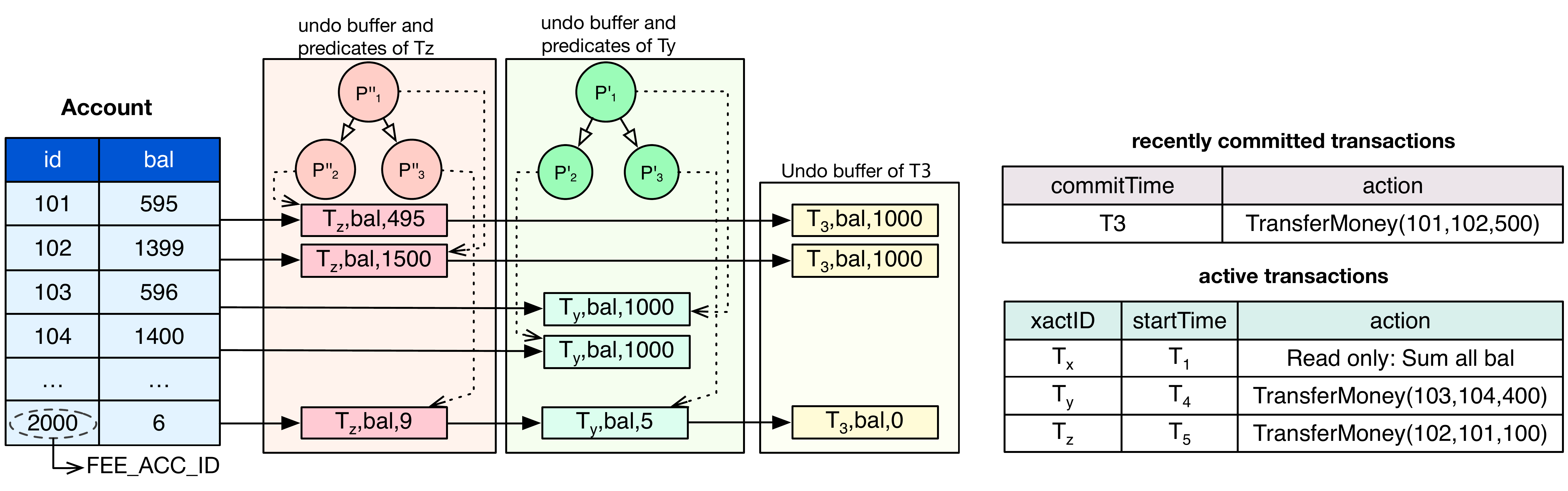}
\end{center}
\vspace*{-0.2cm}%
\caption{A snapshot of the \systemName database for the banking example.}
\label{fig:example1_snapshot}
\end{figure*}

An \systemName program starts by instantiating the root predicates of the predicate graph and calling their execute functions. Each root predicate is evaluated and the closure bound to it is executed. This instantiates the predicates defined in the closure, which get evaluated, and in turn, their closures get executed. 

\begin{contexample}
Figure \ref{fig:example1_snapshot} illustrates a snapshot of the banking example run under \systemName.
In this figure, $P''_1$, $P''_2$ and $P''_3$ are the predicates used in $T_z$. These are the respective runtime instances of predicates $P_1$, $P_2$ and $P_3$ shown in Figure \ref{fig:example1_translated}. The dotted arrows represent the references to their corresponding created versions. The data storage optimization described in section \ref{sec:overview} is used in the version chains.
\end{contexample}

A transaction finishes its execution, when the calls to the execute functions of all the root predicates return successfully. However, some transactions may abort prematurely due to two main reasons. The first reason is receiving a transaction abort command as part of the program after detecting an unwanted state. In this case, repairing or restarting the transaction is irrelevant. The rollback is performed and all the versions created  until that point are discarded. The second reason for a premature abort is having a write-write conflict, described next.

\subsubsection{Handling write-write conflicts} \label{sec:writewrite}

A write-write conflict happens when a transaction tries to write into a data object that has another uncommitted version or a committed version that is newer than the start timestamp of the current transaction. This type of conflict results in a guaranteed failure during the validation phase of \MVCC algorithms such as \NMVCC. As these algorithms do not have a method to recover from such conflicts, they detect them during the execution, abort the transaction prematurely and restart it. However, we strongly believe that if there exists a method for conflict resolution of transactions (such as \systemName), the decision regarding this case should be made either by the transaction programmer or an automated program analyzer.

\begin{example}
Consider a long running transaction which runs successfully until the last statement. The last statement is a write command that writes the results into a highly contended data object. Then, rolling back the transaction just because of a conflict in this last statement might not be the best decision. On the other hand, if a write-write conflict is encountered in the first few statements of the program, and the rest of the program depends on the written data, then continuing the execution from that point would be a complete waste of resources. Here, the repair action to recover from this conflict is similar to restarting it from scratch. The conflicts that happen in predicate $P3$ shown in Figure \ref{fig:example1_translated} are of the former type, while those in predicate $P1$ are of the latter type. 
\end{example}

For this reason, it is logical to provide a configuration option for indicating the behavior in the presence of write-write conflicts. The first option is to deal with a write-write conflict upon its detection by aborting and restarting the transaction early. This is the approach followed by \cite{neumann15}. If the programmer chooses this option, then the transaction is rolled back and restarted from scratch. In this case, reasoning about the state of the closures in order to reuse them is not sound. The second option is to ignore the detection of the write-write conflict and create a new version for the data object. The configuration of the write-write conflict behavior can be done as a table-wide or system-wide setting, which can be overridden for each individual update operation.

Under \systemName, if multiple transactions are allowed to write into a single data object, each creates its own version and adds it to the version chain. Before committing any transaction that is involved in a write-write conflict, these additional versions are not important, as they are  visible only to their own transactions. The approach taken in the validation and commit phase to handle this case without affecting serializability is described in section \ref{sec:wwvalidation}.

\addtocounter{theexample}{-1}
\begin{contexample}
In the snapshot shown in Figure \ref{fig:example1_snapshot}, both $T_z$ and $T_y$ have concurrently written new versions for \checkOK{FEE\_AC-C\_ID}, because write-write conflicts were allowed in TransferMoney. Otherwise, if $T_z$ executes the closure bound to $P''_3$ after $T_y$ finishes all of its operations, $T_z$ prematurely aborts and restarts, because of the uncommitted version written by $T_y$.
\end{contexample}

\designsubsection{\systemName validation and commit} \label{sec:validation}
A transaction that is not aborted prematurely has to be validated before it can be committed. The validity of a transaction is dependent on the validity of its predicates. Other transactions that committed during the execution of the transaction being validated are potentially conflicting with it. Hence, it is necessary to  check if any of the potentially conflicting transactions write to a data object that the current transaction reads from. 

\begin{definition} \label{def:validPred}
\textbf{Valid predicate:} A predicate $P$, belonging to transaction $T$ with start timestamp $S$, is defined to be valid at timestamp $S'$ if and only if the result-set of $P$ is the same when $S'$ is used as the start timestamp of $T$ instead.
\end{definition}

\begin{proposition} \label{prp:predvalid}
A predicate $P$, belonging to transaction $T$, is valid at timestamp $C$ if there is no new version committed between $S$ and $C$ that matches $P$, where $S$ is the start timestamp of $T$.
\end{proposition}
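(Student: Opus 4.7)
The plan is to prove the contrapositive flavor directly: assume no committed version in the interval $(S,C]$ matches $P$, and show that the result-set of $P$ evaluated with start timestamp $S$ coincides with the result-set evaluated with start timestamp $C$. By Definition \ref{def:validPred}, this is exactly what it means for $P$ to be valid at $C$.

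First I would invoke Corollary \ref{cor:databasechanges} to reduce the problem: the only database state changes between $S$ and $C$ are the committed versions in the undo buffers of transactions whose commit timestamp lies in $(S,C]$. Denote this collection $\Delta$. Writing $R(T')$ for the set of visible versions satisfying $P$'s selection criterion when $T$'s start timestamp is $T'$, I would then compare $R(S)$ and $R(C)$ element by element using Definition \ref{def:vdv}. A version $v=(T_1,O,A,N_1)$ committed before $S$ contributes to $R(S)$ iff it is visible at $S$ and matches $P$. By Definition \ref{def:vdv}, $v$ fails to remain visible at $C$ only if some $v'=(T_3,O,\_,\_) \in \Delta$ shadows it; conversely, a version new to $R(C)$ must itself lie in $\Delta$ and match $P$.

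Next I would argue that, under the hypothesis, neither of these two symmetric-difference contributions is possible. The ``new in $R(C)$'' case is ruled out immediately: any such version is committed in $(S,C]$ and matches $P$, contradicting the hypothesis. The ``lost from $R(S)$'' case requires a shadowing version $v' \in \Delta$ for the same object $O$ as a previously matching $v$; since the match operation of a predicate is the one used in the \NMVCC-style precision-locking validation described in Section~\ref{sec:overview}, any update that transitions an object out of $P$'s selection criterion is itself flagged by match. Hence $v'$ matches $P$, again contradicting the hypothesis. Combining the two cases yields $R(S)=R(C)$, which by Definition \ref{def:validPred} establishes that $P$ is valid at $C$.

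The main obstacle I anticipate is the second bullet — the subtle case where a visible matching version is silently displaced by a non-matching newer version. A sloppy reading of ``matches $P$'' that inspects only the post-image would leave a gap in the proof, since such a displacement changes $R$ yet contributes no matching version to $\Delta$. I would therefore be explicit that match is the precision-locking match of Section~\ref{sec:overview}, which treats any update to a previously qualifying object as matching; with that convention pinned down, the rest of the argument reduces to bookkeeping over Definitions \ref{def:dv}–\ref{def:vdv} and Corollary \ref{cor:databasechanges}.
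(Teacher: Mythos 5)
Your proposal is correct and follows essentially the same route as the paper's own proof sketch: reduce the interval $(S,C]$ to the set of committed versions via Corollary \ref{cor:databasechanges}, then argue that if none of them matches $P$ the result-set cannot change, so Definition \ref{def:validPred} applies. The only difference is that you make explicit the symmetric-difference case analysis (versions entering versus leaving the result-set) and the before-/after-image convention for \emph{match} needed to handle displaced versions, a subtlety the paper's one-line sketch leaves implicit.
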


\begin{proofsketch}
The predicate $P$ is evaluated on the snapshot of the database at $S$. The newly committed versions between $S$ and $C$ exist in the undo buffer of the transactions committed in the same time period. Based on Corollary \ref{cor:databasechanges}, these are the only changes that happened to the database in the same period. If $P$ does not match  any newly committed version, then there cannot be a change in its result-set. Thus, based on Definition \ref{def:validPred}, $P$ is still valid at $C$.
\end{proofsketch}

Accordingly, the validation phase consists of matching the predicates of the current transaction against the committed versions of potentially conflicting transactions. If none of the predicates match against any version, then the validation is successful, and the transaction commits. Otherwise, the matched predicates, along with all their descendant predicates, are marked as invalid. For this reason, the predicate graph is traversed in topological sort order, first traversing the higher-level predicates. The \Validation algorithm (Algorithm \ref{alg:validate}) is used by \systemName to validate transactions.

\begin{algorithm}
  \begin{algorithmic}[1]
    \State \textbf{Input}: $G(S)\gets$ predicate graph resulting from executing a 
    
    \hspace{38pt}transaction program $T$ using start timestamp $S$
    \Statex \hspace{35pt}$S'\gets$ validation timestamp, where $S < S'$
    \State \textbf{Output}: $L_1\gets$ list of valid nodes in $G(S)$ using $S'$
    \Statex \hspace{33pt}$L_2\gets$ list of invalid nodes in $G(S)$ using $S'$
    
    \hspace{38pt}along with their descendants
    \State $Q\gets$ the result of applying topological sort on $G(S)$
    \While {$Q$ is non-empty}
        \State $n\gets$remove the node from the head of $Q$
        \If{((a parent predicate of $n$ exists in $L_2$) or 
        \Statex \hspace{23pt} ($n$ is an invalid predicate at $S'$))} 
          add $n$ to tail of $L_2$  \label{alg:validate:l2}
        \Else 
          \hspace{0.5pt} add $n$ to tail of $L_1$ \label{alg:validate:l1}
        \EndIf
    \EndWhile
    \State \Return ($L_1$, $L_2$)
  \end{algorithmic}
\caption{\Validation}
\label{alg:validate}
\end{algorithm}

The \Validation algorithm traverses the predicate graph in order to identify all invalid predicates and their descendants. This is different from \NMVCC where the validation process stops after finding the first invalid predicate, as it cannot succeed anymore. Whereas, in \systemName, the validation phase is additionally responsible for identifying all the invalid predicates. For this reason, the validation process is resumed at the next predicate even though it is known that the transaction cannot succeed validation.
This algorithm not only identifies the invalid predicates, but also topologically sorts and splits the predicate graph into two parts, $L_1$ and $L_2$. All valid predicates are in $L_1$, whereas the invalid predicates and their descendants are in $L_2$. The validation of a transaction, starting at timestamp $S$ and resulting in a predicate graph $G(S)$, is successful at timestamp $S'$, if $L_2$ in the result of \Validation($G(S)$,$S'$) is empty, in which case the transaction commits with commit timestamp $S'$.

\begin{contexample}
In the snapshot  shown in Figure \ref{fig:example1_snapshot}, assume that $T_y$ finishes execution first. Then, as there is no concurrent transaction that committed before $T_y$, it passes validation, its commit timestamp is assigned, and it is added to the recently committed list as $T_6$. Thereafter, $T_z$ enters the validation phase and checks the recently committed list and finds $T_6$ as a concurrent transaction that committed after it had started its execution. Using the \Validation algorithm, $T_z$ traverses the predicate graph in topological sort order and matches each predicate against the committed versions of $T_6$. The \Validation algorithm finds a match for $P''_3$ against $(T_y,bal,5)$, making the predicate invalid. $P''_3$ is the only predicate in this example that gets added to $L_2$, the list of invalid predicates.
\end{contexample}

\begin{lemma}
$L_1$ in the result of the algorithm \Validation($G(S)$,$S'$) (Algorithm \ref{alg:validate}) does not contain any invalid predicate nodes or the descendants of an invalid predicate node, for any given timestamps $S$ and $S'$, and predicate graph $G(S)$, where $S < S'$.
\label{lem:validationL1}
\end{lemma}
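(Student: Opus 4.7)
The plan is to proceed by induction on the position of a node in the topological order produced by the algorithm, exploiting the defining property of topological sort: every parent of a node $n$ appears before $n$ in $Q$, and therefore is classified into either $L_1$ or $L_2$ by the time $n$ is processed.

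First I would set up the induction hypothesis as follows: for every node $m$ processed before $n$, if $m \in L_1$, then $m$ is valid at $S'$ and has no invalid ancestor in $G(S)$. The base case handles the first node extracted from $Q$; since topological order places roots first, this node has no parents, so the guard on line \ref{alg:validate:l2} reduces to the self-invalidity check, and assignment to $L_1$ through line \ref{alg:validate:l1} is possible only when the node is valid at $S'$. It has no ancestors at all, so the ``no invalid ancestor'' clause holds vacuously.

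For the inductive step, suppose $n$ is added to $L_1$. By the \textbf{else} branch of the algorithm, no parent of $n$ lies in $L_2$ and $n$ itself is valid at $S'$. Because topological order guarantees that all parents of $n$ have already been processed, every parent of $n$ must reside in $L_1$. Applying the induction hypothesis to each parent $p$, $p$ is valid and has no invalid ancestor; therefore the set of ancestors of $n$, which is the union of $\{p\}$ and the ancestors of each parent $p$, contains no invalid predicate. Combined with $n$ being valid, this establishes the claim for $n$.

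I expect no serious obstacle; the lemma is essentially a structural invariant baked into the algorithm, and the only subtlety is making explicit that ``descendant of an invalid predicate'' is equivalent to ``has some invalid ancestor'' and that topological processing lets us lift the one-step parent check into a full ancestor guarantee. A minor care point will be phrasing the induction over the dynamic processing order rather than over $G(S)$ directly, so that the hypothesis on parents is available at the moment $n$ is classified. Concluding the proof is a direct application of Definition \ref{def:validPred} to translate the algorithmic notion of validity used on line \ref{alg:validate:l2} into the semantic notion required by the lemma statement.
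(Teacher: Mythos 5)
Your proof is correct and takes essentially the same approach as the paper: the paper's proof sketch simply asserts that the lemma follows immediately from the membership tests on lines \ref{alg:validate:l2} and \ref{alg:validate:l1}, and your induction over the topological processing order is precisely the careful elaboration of why that one-step parent check lifts to the full ``no invalid ancestor'' guarantee. There is no gap.
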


\begin{proofsketch}
This follows immediately from lines \ref{alg:validate:l2} and \ref{alg:validate:l1} of the \Validation algorithm.
\end{proofsketch}

\begin{lemma}
$L_1 \concat L_2$, the concatenation of $L_1$ followed by $L_2$, is a topological sort of $G(S)$, where $L_1$ and $L_2$ are the results of the algorithm \Validation($G(S)$,$S'$) (Algorithm \ref{alg:validate}).
\label{lem:validationLinear}
\end{lemma}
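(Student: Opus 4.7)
The plan is to argue that the concatenation $L_1 \concat L_2$ is a linear extension of the partial order defined by $G(S)$. I will show this by verifying the defining property of a topological sort: for every edge $(u,v)$ of $G(S)$, node $u$ appears before node $v$ in $L_1 \concat L_2$.

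First I would observe that the \Validation algorithm initializes $Q$ as a topological sort of $G(S)$ and then iterates through $Q$ in order, appending each node to the tail of either $L_1$ or $L_2$. Consequently, both $L_1$ and $L_2$ are subsequences of $Q$, and in particular the relative order of any two nodes within $L_1$ (respectively within $L_2$) agrees with their order in $Q$. Since $Q$ itself is a topological sort, this immediately handles two of the four edge cases: if $u$ and $v$ both lie in $L_1$ or both in $L_2$, the edge $(u,v)$ is respected. The case $u \in L_1$, $v \in L_2$ is also trivial, because in $L_1 \concat L_2$ every element of $L_1$ precedes every element of $L_2$.

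The crux of the argument is to rule out the remaining case: an edge $(u,v)$ with $u \in L_2$ and $v \in L_1$. I would argue this by contradiction. Since $(u,v)$ is an edge of $G(S)$, $u$ is a parent predicate of $v$. Because $u$ was placed in $L_2$, line \ref{alg:validate:l2} of the algorithm fired on $u$, so $u$ is either invalid at $S'$ or already has some parent in $L_2$. Now $v$ is processed strictly after $u$ in $Q$ (as $Q$ is topologically sorted and $(u,v)$ is an edge), so by the time $v$ is examined, $u$ is already in $L_2$. Therefore $v$ has a parent predicate that lies in $L_2$, which triggers line \ref{alg:validate:l2} for $v$ and forces $v$ into $L_2$, contradicting $v \in L_1$.

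I do not expect any serious obstacle here; the argument is essentially a book-keeping check on the algorithm. The only point requiring mild care is making clear that ``processed strictly after'' follows from $Q$ being a topological sort (so that $u$ is indeed in $L_2$ at the moment $v$'s membership in $L_1$ versus $L_2$ is decided), which is why I would set up the proof as an induction on the order in which nodes are removed from $Q$, or equivalently invoke Lemma~\ref{lem:validationL1} which already captures the ancestor-propagation property I need.
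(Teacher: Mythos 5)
Your proposal is correct and follows essentially the same route as the paper's own proof sketch: both reduce the claim to showing that no edge of $G(S)$ runs from $L_2$ back into $L_1$, and both derive this from the fact that $Q$ is processed in topological order, so that when a node $v$ is examined, any parent already placed in $L_2$ triggers the condition on line \ref{alg:validate:l2} and forces $v$ into $L_2$ as well. Your version merely spells out the edge-case analysis and the contradiction more explicitly than the paper's sketch does; no gap.
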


\begin{proofsketch}
Algorithm \ref{alg:validate} starts by topologically sorting the predicate graph. Then, the result of the topological sort is divided into two lists, $L_1$ and $L_2$. Since any subset of a topological sort is also topologically sorted, both $L_1$ and $L_2$ individually follow the topological sort. Then, in order to prove that $L_1\concat L_2$ is topologically sorted, it is sufficient to show that there is no node in $L_2$ that has a descendant node in $L_1$. In Algorithm \ref{alg:validate}, the only place where nodes are added to $L_2$ is line \ref{alg:validate:l2}. Based on the condition in line \ref{alg:validate:l2} and the fact that nodes are traversed in topological sort order, if a node is in $L_2$, all of its descendant nodes are also in $L_2$. \end{proofsketch}

\subsubsection{Impact of write-write conflicts on validation} \label{sec:wwvalidation}

Writes by concurrent transactions to the same data object lead to conflicts only in specific scenarios. One such case is when there is a predicate that reads the current value of the data object before updating it. In this case, after one of these transactions commits, the rest fail validation while matching against the committed version. Otherwise, the write is blind and the transaction updates the data object without reading its existing value. However, all writes done by a  transaction, including the blind ones, are visible to the other transactions only after its commit. Therefore, accepting blind writes does not affect serializability, as the read-set of the transaction remains unmodified at commit time, and the illusion of running the entire transaction at commit time is maintained.

Moreover, for keeping the procedure for constructing the visible value simple, \systemName moves a committed version next to the other committed versions in the version chain. This technique preserves the semantics that uncommitted versions are in the beginning of the version chain, and committed ones are ordered by their commit timestamps at the end of the version chain.

\designsubsection{\systemName repair}
If a transaction fails validation, it enters the repair phase. All the read operations in a transaction that runs under a timestamp ordering algorithm (such as \systemName or \NMVCC) return the same result-sets when re-executed, if the start timestamp does not change. In other words, if a transaction reads obsolete data, even if it is re-executed, it would read the same data, and fail validation again. Therefore, the first step for repairing the transaction is picking a new start timestamp $S'$ for the transaction. Then, the \Repair algorithm, shown in Algorithm \ref{alg:repair}, is applied with the parameters ($G(S)$, $S'$), where $G(S)$ is the predicate graph resulting from the initial round of execution using start timestamp $S$. The \Repair algorithm uses the results of  the \Validation algorithm to prune the invalid predicates in the predicate graph.

\begin{algorithm}
  \begin{algorithmic}[1]
    \State \textbf{Input}: $G(S)\gets$ predicate graph resulting from executing a 
    
    \hspace{38pt}transaction program $T$ using start timestamp $S$
    \Statex \hspace{35pt}$S'\gets$ validation timestamp, where $S < S'$
    \State \textbf{Output}: $G'(S')\gets$ repaired predicate graph
    \State $(L_1,L_2)\gets$ \Validation($G(S)$,$S'$) \label{alg:repair:l0}
    \State $F\gets$ Set of all nodes in $L_2$ with no incoming edges from $L_2$ \label{alg:repair:l1}
    \ForAll{predicate node $f$ in $F$} \label{alg:repair:l5}
       \ForAll{predicate node $h$ in descendant nodes of $f$} \label{alg:repair:l2}
         \State clear the list of versions in $h$ and remove them from the
         \Statex \hspace{35pt} undo buffer of the transaction \label{alg:repair:l7}
         \State remove $h$ from $G(S)$ \label{alg:repair:l3}
       \EndFor
       \State clear and remove list of versions in $f$  and remove them
       \Statex \hspace{23pt} from the undo buffer of the transaction\label{alg:repair:l8}
       \State call $f.execute(C(f), S')${\small , where $C(f)$ is the closure bound to $f$}\label{alg:repair:l4}
    \EndFor \label{alg:repair:l6}
    \State \Return $G(S)$ as $G'(S')$
  \end{algorithmic}
\caption{\Repair}
\label{alg:repair}
\end{algorithm}

In line \ref{alg:repair:l1} of the \Repair algorithm, the set of nodes in $L_2$ with no incoming edges from $L_2$ are selected into $F$, where $L_2$ is the second part of the output of the \Validation algorithm. Based on the selection criterion, it is guaranteed that the nodes in $F$ are not descendants of each other. It can be observed in lines \ref{alg:repair:l2} to \ref{alg:repair:l3} that a given invalid predicate $f$ in $F$ is {\em pruned}, which is the process of removing all the versions created by $f$ and its descendants, and then removing the descendants from the predicate graph. After pruning the predicate node $f$, it is sufficient to  re-execute $C(f)$, which is done in line \ref{alg:repair:l4}. Executing the closure re-instantiates all  pruned descendant predicates. The order of re-executing the invalid predicates does not matter, as they are independent. If one is a descendant of the other, the former would have been removed from the graph during the pruning process. In the \Repair algorithm, the closures of the valid predicates are not re-executed, as there are no changes in their result-sets.

\begin{contexample}
After $T_z$ fails validation, a new start timestamp $T_7$ is assigned to it. Based on the validation results, $P''_3$ is the only invalid predicate in $T_z$. $P''_3$ reads the version written by $T_6$ with $bal=6$, puts in-place value $bal=10$ in the table and creates $(T_z,bal,6)$ as its version.
\end{contexample}

After applying the \Repair algorithm, the transaction enters the validation phase again. The validation phase is the same as the one for the initial execution of the transaction and has a possibility of success or failure.

\begin{contexample}
In the above example, $T_z$ succeeds in the validation phase this time, as there is no concurrent transaction committed during its new lifetime (after $T_7$).
\end{contexample}

\begin{lemma} \label{lem:prune}
Given a predicate graph  $G(S)$ and an arbitrary node $X$ in it, pruning $X$ and re-executing $C(X)$, the closure bound to $X$, under the same start timestamp $S$, rebuilds $G(S)$.
\end{lemma}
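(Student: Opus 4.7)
The plan is to prove the claim by induction on the depth of the sub-DAG of $G(S)$ rooted at $X$. The core invariant is that re-executing $C(X)$ at start timestamp $S$, after pruning, regenerates precisely the same structure beneath $X$: the same $V(X)$, the same child predicates in $D(X)$ with the same bound closures, and (recursively) the same sub-DAGs rooted at those children. Everything outside the sub-DAG rooted at $X$ is untouched by pruning, so rebuilding this sub-DAG suffices to rebuild $G(S)$.

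First I would establish that the inputs fed to $C(X)$ on the second execution are identical to those on the first. By Definition \ref{def:closure}, $C(X)$ is a deterministic function of the result of evaluating $X$ together with a set of immutable context variables, namely the transaction's input parameters, the result-sets of the ancestor predicates of $X$, and immutable snapshots of variables defined in the closures of those ancestors. Pruning only affects $X$ and its descendants, so the ancestor closures, their result-sets, and the immutable snapshots they pass down are preserved verbatim. Since the start timestamp is still $S$, Definition \ref{def:vdv} implies that evaluating $X$ on the current database state yields the same visible versions it saw originally; any version removed by pruning was created by $X$ itself or by one of its descendants, hence was not visible at the moment $X$ was first evaluated. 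Determinism of $C(X)$ therefore forces the re-execution to perform the very same sequence of predicate instantiations and data-manipulation operations.

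For the base case, $X$ is a leaf in $G(S)$ (i.e.\ $D(X) = \emptyset$). The re-execution of $C(X)$ then only issues the same data-manipulation operations as before, producing exactly the same versions registered to $V(X)$; this reconstructs the trivial sub-DAG rooted at $X$. For the inductive step, suppose the lemma holds for every predicate whose sub-DAG has strictly smaller depth. By the determinism argument above, $C(X)$ re-instantiates exactly the child predicates $Y_1,\ldots,Y_k$ of $D(X)$, each bound to the same closure $C(Y_i)$ as before, and invokes $Y_i.\mathit{execute}(C(Y_i), S)$. Each $Y_i$ roots a strictly shallower sub-DAG, so by the inductive hypothesis these calls rebuild the sub-DAGs originally rooted at the $Y_i$'s. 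Combined with the regenerated $V(X)$, this restores the entire sub-DAG rooted at $X$, completing the proof.

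The main obstacle will be the visibility argument in the first paragraph: I need to rule out that removing versions from the transaction's own undo buffer during pruning could perturb what a re-issued read inside $C(X)$ (or inside the replayed closures of descendants) sees, via the second clause of Definition \ref{def:vdv}. The key observation is that any such removed version was, by the causal order of the original execution, written strictly after $X$ was evaluated and after each of the reads being replayed, so those reads originally saw the state without them; re-issuing them in the same order under the same start timestamp $S$ therefore yields the same visible versions. Once this is nailed down, determinism and immutability of context variables carry the rest of the argument through routinely.
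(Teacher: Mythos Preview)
Your proposal is correct and rests on the same two pillars as the paper's proof: (i) the re-execution of $C(X)$ sees the identical database snapshot because the start timestamp is unchanged and pruning has removed exactly the versions that $X$ and its descendants had contributed, and (ii) $C(X)$ is deterministic on its inputs (result-set of $X$ plus immutable context), so it must reproduce the same sub-structure. The paper's own proof is a three-sentence sketch stating precisely these two points and concluding directly; you package the same argument as an induction on the depth of the sub-DAG rooted at $X$ and additionally spell out the visibility subtlety (that pruned versions were written causally after the reads being replayed, so their removal cannot perturb what those reads see under Definition~\ref{def:vdv}). This extra care is sound and arguably tightens a point the paper leaves implicit, but the underlying approach is the same.
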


\begin{proof}
Observe that both execution and re-execution of $C(X)$ view the same snapshot of the database, as the start timestamp of the transaction is not changed. Since all the versions created by $X$ and its descendants are pruned, any changes created by the closures bound to $X$ and its descendants are removed. In addition, C(X), based on Definition \ref{def:closure}, is a deterministic function, which guarantees given the same input it generates the same output. Therefore, re-executing $C(X)$ re-creates the same predicate graph.
\end{proof}

\begin{lemma} \label{lem:repair}
Let $G(S)$ be the predicate graph resulting from the execution of a transaction $T$ with start timestamp $S$ that failed validation at timestamp $S'$. Assume that $G'(S')$ is the predicate graph resulting from applying the \Repair algorithm on ($G(S)$, $S'$). Instead of applying the \Repair algorithm, if $T$ was aborted and restarted with start timestamp $S'$, resulting in a predicate graph $G''(S')$, then $G''(S')$ is equivalent to $G'(S')$.
\end{lemma}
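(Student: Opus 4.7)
The plan is to prove equivalence by structural induction along the topological order of the predicate graph, exploiting three ingredients the paper has already established: (i) the characterization of \emph{valid predicate} (Definition \ref{def:validPred}) together with Proposition \ref{prp:predvalid}, which says a predicate in $L_1$ returns the same result-set when re-evaluated at $S'$ as it did at $S$; (ii) the determinism of closures (Definition \ref{def:closure}), together with the fact that the inputs to a closure $C(P)$ are only the input parameters of $T$, the result-set of $P$, and immutable snapshots of the ancestor closures' variables and result-sets; and (iii) Lemma \ref{lem:prune}, the prune-and-re-execute lemma.

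First, I would fix notation: let $(L_1,L_2)=\Validation(G(S),S')$ and let $F$ be as in line \ref{alg:repair:l1} of \Repair. I would then state precisely what it means for two predicate graphs to be \emph{equivalent} — namely a bijection between their predicate nodes that preserves parent/child edges, data-selection criteria, result-sets, bound closures (up to alpha-renaming of local mutable variables), and registered version lists. Under this definition, the lemma reduces to constructing such a bijection between $G'(S')$ and $G''(S')$.

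Next, I would carry out the induction along a topological ordering of $G''(S')$. For the base case, the program of $T$ instantiates the same root predicates in both executions since root predicates depend only on input parameters (which are identical in both runs at $S'$). For the inductive step, take a predicate $P$ in $G''(S')$ and assume all ancestors of the corresponding node already match in $G'(S')$. Two cases arise. If the counterpart of $P$ lies in $L_1$ (i.e., is valid and has no invalid ancestor), then by Proposition \ref{prp:predvalid} its result-set at $S$ already equals its result-set at $S'$; since the closure $C(P)$ received identical context (by the induction hypothesis) and identical result-set arguments, and is deterministic, its children in $G(S)$ coincide with those it would produce when freshly executed at $S'$, which in turn is what $G''(S')$ contains. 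If the counterpart of $P$ is a descendant of some $f\in F$, then by the construction of $F$ every proper ancestor of $f$ is in $L_1$, so by the previous case the context passed to the call $f.\mathit{execute}(C(f),S')$ on line \ref{alg:repair:l4} agrees with the context $f$ would see in $G''(S')$; invoking Lemma \ref{lem:prune} with start timestamp $S'$ (justified because the fresh execution and the re-execution at $S'$ view the same snapshot) yields that the subgraph rebuilt under $f$ coincides with the subgraph $G''(S')$ would produce below $f$. Finally, the version-list bookkeeping matches because lines \ref{alg:repair:l7} and \ref{alg:repair:l8} clear all versions produced by the pruned portion before re-execution, so the re-executed closures generate exactly the same versions that a from-scratch run would add to the undo buffer.

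The main obstacle I expect is the careful treatment of \emph{context} across the inductive step: specifically, showing that the immutable snapshots of ancestor closure variables that $P$'s closure observes in the repaired run are literally the same values as those observed in the fresh run at $S'$. This relies on the fact that ancestor closures in $L_1$ are never re-executed by \Repair, so one must argue that not re-executing them at $S'$ is safe — which is exactly where validity of every predicate on the path from a root to $P$ (inherited through the topological traversal of Algorithm \ref{alg:validate} and the condition in line \ref{alg:validate:l2}) combines with closure determinism to guarantee that the ancestors' variables would have held the same values had they been freshly executed at $S'$. Once that invariant is stated cleanly, the rest of the induction is routine.
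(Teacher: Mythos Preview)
Your proposal is correct and follows essentially the same approach as the paper: both partition the graph via $(L_1,L_2)$ and $F$, use validity of $L_1$ predicates together with closure determinism to argue that the unrepaired portion would be recreated identically at $S'$, and then argue that re-executing each $f\in F$ at $S'$ produces the same subgraph as a fresh run. The only cosmetic differences are that you frame the argument as an explicit topological induction on $G''(S')$ (the paper does a three-way case split on nodes of $G(S)$) and that your appeal to Lemma~\ref{lem:prune} is really to its underlying reasoning rather than its literal statement, since that lemma is phrased for a single fixed timestamp; the paper sidesteps this by arguing the re-execution step directly.
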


\begin{proof}
Assume that the database is duplicated at timestamp $S'$ into two identical instances, $D_1$ and $D_2$. On instance $D_1$, $T$ aborts and restarts using the new start timestamp $S'$, resulting in the predicate graph $G''(S')$. On instance $D_2$, the \Repair algorithm is applied on ($G(S)$, $S'$), which results in the predicate graph $G'(S')$. It should be noted that both $G'(S')$ and $G''(S')$ have the same root predicates, without considering their lists of versions. The reason is that the creation of a predicate depends on its parent predicate. The root predicates do not have a parent by definition, so they are created regardless of the database state. It is also guaranteed that they are also not removed from the predicate graph by the \Repair algorithm, as they are not descendants of any other nodes. Moreover, in line \ref{alg:repair:l0} of the \Repair algorithm, the results of the \Validation algorithm on ($G(S)$, $S'$) are used, i.e., $L_1$ and $L_2$. Let $X$ be an arbitrary node selected from $G(S)$. There are three cases.

\textbf{Case 1:} $X$ is in $L_1$. Then, $X$ is a valid predicate, and based on Lemma \ref{lem:validationL1}, all of its ancestors are valid, too. Let $Y$ be the list of root predicates that are the ancestors of $X$. Nodes in $Y$ are common in both $G'(S')$ and $G''(S')$. Hence, as part of executing $T$ on $D_1$, the corresponding nodes in $G''(S')$ are created and the closures bound to them are executed. As nodes in $Y$ are valid nodes, they read the same data as $S$, and take the same program flow, which results in creating $X$ with the same list of versions inside it.

\textbf{Case 2:} $X$ is in $L_2$ and has no incoming edge from another node in $L_2$. Then, either $X$ has no parent,  or its parents are in $L_1$. In both cases, $X$ is created while executing $T$ from scratch on $D_1$. If $X$ has no parent, then it exists regardless of executing under timestamp $S$ or $S'$. If $X$ has a list of parents $Y$ in $L_1$, then based on Case 1, all nodes $Y$ exist in $G''(S')$ and they are valid. As the execution of their closures using the timestamp $S'$ is the same as $S$, they create the same child predicates and $X$ is among them. Lines \ref{alg:repair:l5}-\ref{alg:repair:l6} of the \Repair algorithm prune $X$ and re-execute it. Then, the closure bound to $X$ is re-executed in line \ref{alg:repair:l4} of the \Repair algorithm using timestamp $S'$, which is identical to executing it on $D_1$. And, it results in the same descendant nodes and list of versions for $X$ in both $G'(S')$ and $G''(S')$.

\textbf{Case 3:} $X$ is in $L_2$ and has an incoming edge from another node in $L_2$. In this case, $X$ has an ancestor, $Z$, from $L_2$ that falls into Case 2. Thus, $Z$ is pruned and $X$ is removed from $G(S)$ in lines \ref{alg:repair:l7} and \ref{alg:repair:l8} of the \Repair algorithm. As it is shown in Case 2, the same descendant nodes are recreated for $Z$ in both $G'(S')$ and $G''(S')$.

Thus, $G'(S')$ and $G''(S')$ are equivalent.
\end{proof}

In the case of a validation failure in this stage, another round of repair is initiated and this cycle continues until validation is successful. It is important to note that the whole process of validating a transaction, and drawing a commit timestamp or a new start timestamp depending on the result of the validation is done in a short critical section. However, the repair phase is done completely outside the critical section, as if the transaction is running concurrently with other transactions.

\designsubsection{Serializability Proof}

We prove that the schedules created using \systemName are commit-order serializable. Consequently, it is  guaranteed not only that the schedules created by \systemName are serializable, but also that one equivalent serial schedule can be proposed by creating a serial sequence of transactions using their commit order under \systemName.

\begin{theorem}
The committed projection of any multi-version schedule H produced under \systemName is conflict equivalent to a serial single-version schedule H'  where the order of transactions in H' is the same as the order of successful commit operations in H and uncommitted transactions are ignored.
\end{theorem}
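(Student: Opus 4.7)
The plan is to establish conflict equivalence via the reads-from relation: it suffices to show that every committed transaction in $H$ reads the same committed version of every data object as it would in the serial schedule $H'$ obtained by running the committed transactions back-to-back in $H$-commit order. Enumerate the committed transactions as $T_1,\ldots,T_n$ with commit timestamps $c_1<\cdots<c_n$, and for each $T_i$ let $s_i$ denote its final (post-repair) start timestamp and $G_i$ its final predicate graph.

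First, I would reduce reasoning about the (possibly many) repair rounds to a single effective execution. By iterated application of Lemma~\ref{lem:repair}, the final graph $G_i$ that is submitted for the successful validation at $c_i$ is equivalent to the predicate graph that would have been produced by running $T_i$ from scratch with start timestamp $s_i$. Thus, without loss of generality, each $T_i$ can be treated as having executed once at $s_i$ and committed at $c_i$.

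Second, I would invoke the validation guarantee to align reads at $s_i$ with reads at $c_i$. Successful validation at $c_i$ means that every predicate $P\in G_i$ passes the check of Algorithm~\ref{alg:validate}, which by Proposition~\ref{prp:predvalid} implies that no version committed in $(s_i,c_i]$ matches $P$. By Definition~\ref{def:validPred}, the result-set of every predicate of $T_i$ is identical whether evaluated at $s_i$ or at $c_i$, so the entire read-set of $T_i$ equals what $T_i$ would read on the committed snapshot at $c_i$. By Proposition~\ref{prp:undobuffer} and Corollary~\ref{cor:databasechanges}, that snapshot is fully determined by the committed undo buffers of $T_1,\ldots,T_{i-1}$, which is exactly the state visible to $T_i$ in the serial schedule $H'$ after $T_1,\ldots,T_{i-1}$ have run in order. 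Hence $T_i$ reads from the same producing transaction (or from the initial database) in $H$ and in $H'$, and therefore also performs the same writes, which establishes identical version-order and reads-from relations across the two schedules.

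Third, I would dispatch the complication introduced by write-write conflicts (Section~\ref{sec:wwvalidation}). Concurrent uncommitted versions in a chain are by construction invisible to other transactions, and on commit the chain is reordered so that committed versions appear in commit-timestamp order at the tail. Hence the visible-version rule of Definition~\ref{def:vdv}, when applied at $c_i$, resolves to the version written by the most recent $T_j$ with $j<i$ that wrote the object, matching the value $T_i$ would see in $H'$; blind writes do not alter $T_i$'s read-set and therefore do not break the equivalence argument.

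The main obstacle I anticipate is the first step: rigorously arguing that a cascade of \Repair\ rounds, each of which only partially rebuilds the graph and leaves untouched sub-DAGs intact, yields something equivalent to a fresh execution at the final start timestamp. Lemma~\ref{lem:repair} gives the one-shot statement, so the remaining work is an induction on repair rounds, carefully observing that a predicate untouched in a given round was valid at the then-current validation timestamp and hence also reads the same data at any later start timestamp at which the next repair is triggered (by the monotonicity of the committed-versions list and the match semantics). Once this inductive reduction is in place, the rest of the proof reduces to the standard snapshot-isolation-style argument outlined above.
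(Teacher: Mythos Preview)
Your proposal is correct and well-structured, but it takes a genuinely different route from the paper. After the same reduction via Lemma~\ref{lem:repair} to a single effective execution at the final start timestamp, the paper proceeds by contradiction on conflicting operation pairs: it supposes some operation $o_1$ of a committed $T_1$ cannot be delayed to $C_1$, fixes a conflicting $o_2$ in a concurrently committed $T_2$ with $C_2<C_1$, and dispatches the four read/write combinations one by one (read--read and write--write swap harmlessly, write--read is invisible by Definition~\ref{def:vdv}, and read--write would have been caught by validation). You instead argue constructively, by induction on commit order, that each $T_i$'s reads-from relation coincides in $H$ and $H'$; this is closer in spirit to a view-equivalence argument, but since you also obtain identical writes and hence identical version order, it yields conflict equivalence as stated. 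Your route makes the role of Proposition~\ref{prp:predvalid} and Definition~\ref{def:validPred} more explicit and is arguably more natural in the multi-version setting; the paper's case analysis hews more directly to the ``conflict equivalent'' phrasing and avoids the implicit induction on $i$ (which your argument needs, since $T_i$'s snapshot at $c_i$ matches the state after $T_1,\ldots,T_{i-1}$ in $H'$ only once you have already established that those transactions wrote the same versions in both schedules). On the multi-round repair issue you flag, the paper does exactly what you anticipate: it invokes Lemma~\ref{lem:repair} once and leaves the induction over successive repair rounds implicit.
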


\begin{proof}
Based on Definition \ref{def:vdv} and Proposition \ref{prp:undobuffer}, the versions created by a transaction become visible to the other transactions only after the transaction commits. Accordingly, the operations done by uncommitted transactions cannot affect the serializability of the committed transactions. For this reason, the uncommitted transactions can safely be ignored, and only the committed transactions are considered in this proof.

By showing that all the dependencies among different transactions executed under \systemName have the same order as their commit timestamp, it can be proven that any execution of transactions under \systemName is serializable in commit order. Read-only transactions read all committed versions that are committed prior to their start. Moreover, the commit timestamp of a read-only transaction is the same as its start timestamp, as if it is executed at that  point in time.

An update transaction starts, gets executed, and then enters the validation and commit phase. If the validation is successful, a commit timestamp $C$ is assigned to the transaction. Otherwise, the transaction acquires a new start timestamp $S'$ and enters the repair phase. As it is already shown in Lemma \ref{lem:repair}, the predicate graph resulting from applying the \Repair algorithm at $S'$ is the same as the one resulting from aborting and restarting the transaction from scratch at timestamp $S'$. Then, the repaired transaction enters the validation phase again, and this cycle continues until the transaction succeeds in the validation phase.

Thus, it is sufficient to prove that for an update transaction that starts with timestamp $S$, executes, successfully passes the validation phase and commits, the visible effect of the transaction is as if it is executed at one point in time, $C$, where $S < C$. The proof is done by contradiction, and is similar to the serializability proof in \cite{neumann15}. There are some modifications for allowing write-write conflicts in \systemName, as this kind of conflicts leads to premature abort and restart in \cite{neumann15}.

Let $T_1$ be an update transaction from the committed projection of $H$ with start timestamp $S_1$ and commit timestamp $C_1$. Suppose that the execution of $T_1$ cannot be delayed until $C_1$. Then, there is an operation, $o1$ in $T_1$ that conflicts with an operation $o2$ in another transaction, $T_2$ (started at $S_2$ and committed at $C_2$), with $o1 < o2$ and $T_2$ committed during the lifetime of $T_1$, i.e., $C_2 < C_1$. There are four cases corresponding to the possible combinations of two operations $o1$ and $o2$.

\textbf{Case 1:} both are reads. In this case, $o1$ and $o2$ can be swapped and this contradicts the assumption that $o1$ and $o2$ are conflicting.

\textbf{Case 2:} $o1$ is write and $o2$ is read. Based on Definition \ref{def:vdv}, the version written by $o1$ is not visible to $o2$, as $S_2 < C_1$. Thus, it contradicts the assumption that $o1$ and $o2$ are conflicting.

\textbf{Case 3:} $o1$ is read and $o2$ is write. By $C_1$, the version $V_2$ written by $o2$ is committed and exists in the undo buffer of $T_2$, as $C_2 < C_1$. The operation $o1$ is done via a predicate $P_1$. Hence, while validating $T_1$ at $C_1$, the \Validation algorithm matches $P_1$ against $V_2$. If $P_1$ matches, $T_1$ fails validation and cannot commit at $C_1$. This contradicts the fact that $T_1$ with this commit timestamp is in the committed projection of $H$. Otherwise, if $P_1$ does not match $V_2$, then it contradicts the assumption that $o1$ and $o2$ are conflicting.

\textbf{Case 4:} both are writes. In this case, $o1$ and $o2$ can be swapped and this contradicts the assumption that $o1$ and $o2$ are conflicting. Basically, an individual write operation acts as a blind-write. Thus, the blind-write operation in $o1$ can be delayed until $C_1$. If a write operation is not a blind-write, there is a read operation done before it, and this case is converted to either Case 2 or Case 3.
\end{proof}

\section{Interoperability with MVCC} \label{sec:interop}

The only interaction of an \systemName transaction with other transactions is during the validation phase. The algorithm needs to know about the  versions committed during the lifetime of the current transactions, so as to match the predicates against them. This is the only information that is needed from a previously executed transaction.
Consequently, any MVCC algorithm, such as \NMVCC, that provides information about the committed versions can seamlessly inter-operate with \systemName. This interoperability provides backward compatibility for free, and makes it possible for the transaction developers to program their new transactions in \systemName, and gradually convert the old transactions in the system into \systemName.

\section{Optimizations} \label{sec:optimizations}

In addition to the generic design of \systemName, several optimizations are possible in order to improve transaction execution. The rest of this section is dedicated to optimizations that can be employed by this concurrency control algorithm.

\subsection{Attribute-Level Predicate Validation}

In order to validate an \systemName transaction, the committed versions of all concurrent transactions are considered. These versions are matched (as the whole \highlight{record}) against each predicate in the predicate graph. If a version matches with a predicate, a conflict is \highlight{declared}. However, it is a pessimistic approach, as it is possible that the modified columns in the concurrently committed versions are not used in the current transaction.

As an optimization, the validation can be done at the attribute-level instead. This optimization has already been used in previous works \cite{neumann15}. In order to enable the attribute-level validation, the columns that are used from the result-sets of the predicates are marked for runtime monitoring. In addition, all columns in the data selection criterion of the predicate are monitored. At runtime, each created version stores a list of columns modified in it. Then, in the validation phase, while checking a predicate against a version, the intersection between the monitored columns in the predicate and the modified columns in the version is first computed. If the intersection is empty, there cannot be a match. Otherwise, the match operation that is predicate-specific is performed. 

\subsection{Reusing Previously Read Versions}\label{sec:reuse}
The predicates that failed validation are the starting points of the repair phase. After acquiring a new start timestamp in this phase, the failed predicates are re-evaluated, and their result-sets are fed into their assigned closures. Evaluating these predicates from scratch is one approach and it is taken by default in \systemName. However, it is also possible to re-use some of the computation done in the initial execution round that failed validation. 

Even though the realization of this optimization depends on the type of predicates and the way they access their target data, there are some similarities among all of them. Each predicate that wants to re-use the computation keeps a reference to its result-set. The result-set has to be computed nonetheless, as it needs to be fed into the closure of the predicate. This additional reference is kept only until a successful commit, which is also the end of the lifetime of the predicate. Then, in case of a failed validation, the result-sets have to be \highlight{fixed} for the failed predicates. The procedure to fix the result-set is predicate-specific. This procedure is merged with the validation phase, and is done at the same time, as both require matching predicates against versions. 

\addtocounter{theexample}{1}
\begin{example}
Consider a predicate that selects data based on a condition over non-indexed columns of a large table. In this case, the initial execution is costly, as a full scan over the whole table is required. However, if this predicate fails, the only reason for this failure is another concurrent transaction committing a version that should be a part of the result-set of the predicate. Therefore, the result-set can be fixed by accommodating the concurrently committed versions into the result-set. 
\end{example}

This optimization comes with a cost for keeping a reference to the result-set of each predicate, as well as the procedure for fixing them. Therefore, this optimization is not used by default, and can be enabled per predicate instance. The decision regarding this optimization is taken either by the transaction programmer, or an automated analyzer that monitors the failure rate of each predicate. One heuristic approach is to activate this optimization only for the predicates that have a higher failure probability, where fixing the result-set is cheaper than re-evaluation.







\section{Implementation} \label{sec:impl}

\begin{figure*}[htb]
        \begin{subfigure}[t]{0.48\textwidth} \centering
                \includegraphics[width=\columnwidth]{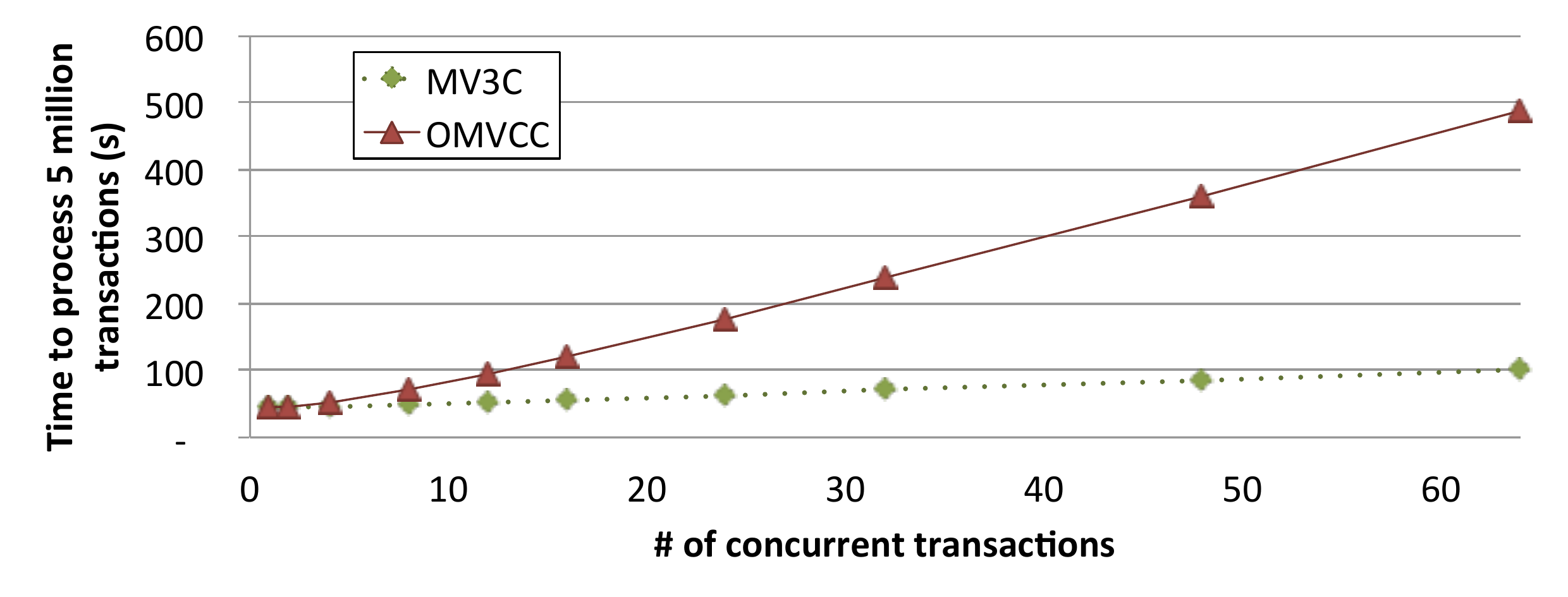}
                \caption{Impact of the number of concurrent transactions}
                \label{fig:trading_window}
        \end{subfigure}%
        \hfill
        \begin{subfigure}[t]{0.48\textwidth} \centering
                		\includegraphics[width=\linewidth]{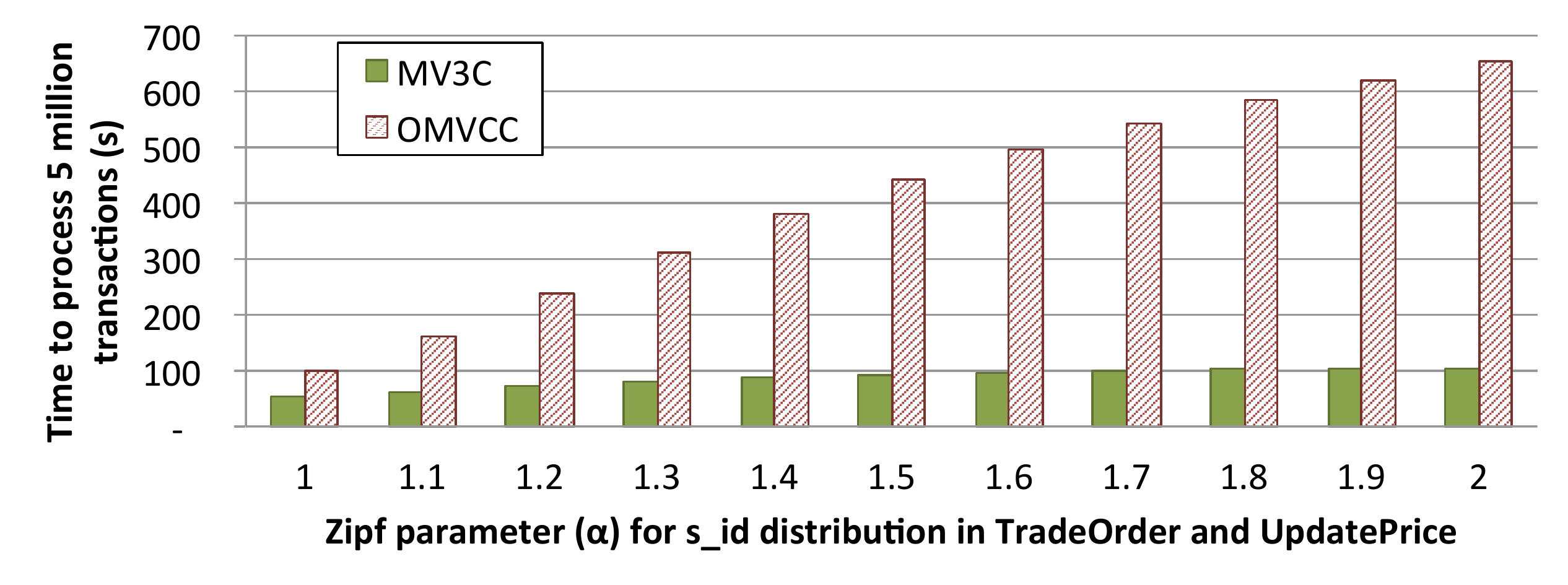}
                \caption{Impact of the percentage of conflicting transactions}
                \label{fig:trading_percentage}
        \end{subfigure}%
        \vspace*{0.6cm}%
         \caption{Trading benchmark experiments}
         \label{fig:trading_exp}
        \vspace*{-0.2cm}%
\end{figure*}

\systemName is more of a concept rather than a specific and strict concurrency control technique, just like \MVCC. Thus, there can be different flavors of \systemName, each with its own optimizations. 
In this section, the details of our implementation of \systemName are described.

\pheader{Transaction Management}
\systemName has a {\em transaction manager} that is responsible for starting and committing the transactions, similar to the one in \cite{neumann15}. It primarily stores four data items which are shared among all transactions, namely {\em recently committed} transactions, {\em active} transactions, {\em start-and-commit timestamp sequence generator}, and {\em \highlight{transaction identifier} sequence generator}.

The active transactions contains the list of ongoing transactions that have not committed. This list is updated whenever a transaction starts or commits. It is mainly used for tracking the active transaction with the oldest start timestamp, which is necessary for garbage collection. Like \cite{neumann15}, the garbage collection of the versions created by a committed transaction is performed after ensuring that there is no older active transaction that could read the versions. The start-and-commit timestamp sequence generator is used for issuing start and commit timestamps. Since both timestamps are issued from the same sequence, in order to find the transactions that ran concurrently with a given transaction $T$, it is sufficient to choose transactions from the recently committed list for which the commit timestamp is greater than the start timestamp of $T$.

In addition, the transaction identifier sequence generator  assigns a unique identifier to each transaction. This unique identifier plays the role of a temporary commit timestamp for an active transaction. The sequence generator starts from a very large number, larger than the commit timestamp of any transaction that can appear in the lifetime of the system. This timestamp is used in the uncommitted versions written by an active transaction, and simplifies the distinction between committed and uncommitted versions.


Starting a transaction using the transaction manager is done by drawing a start timestamp and a transaction identifier from the corresponding sequence generators. Then, a transaction object is created with these values. The transaction object is an encapsulation of the data needed for running, validating and committing a transaction. Besides the start timestamp and the transaction ID, the transaction object keeps track of the undo buffer and the predicate graph. The predicate graph is implemented as a list of root predicates with each predicate storing a list of its child predicates. All operations that interact with the database for reading or writing data require the transaction object, which is passed as a parameter to the operations. All data manipulation operations result in one or more versions, and references to these versions are kept in the undo buffer.

\pheader{Concurrent execution} In this implementation, it is assumed that the transaction programs are divided into smaller pieces and the concurrent execution of different transactions is realized by interleaving these small pieces of different transactions. Then, at each step, a single transaction piece gets executed, as the implementation is single threaded. This approach can easily be extended to a multi-threaded implementation where each operation on the data (shared among different threads, each one running its own transaction) is done in a critical section, where only a single thread can perform data manipulation. A lock-free multi-threaded implementation of \systemName is an interesting future work.

\section{Evaluation} \label{sec:eval}

We use the TATP \cite{tatp} and TPC-C \cite{tpcc} benchmarks, as well as some of our own mini benchmarks in order to showcase the main pros and cons of \systemName compared to \NMVCC \cite{neumann15}. We have implemented both algorithms in C++11. The size of the implementation, excluding the tests and benchmarks, for \systemName is 4.5 kLOC and that for \NMVCC is 3 kLOC. In both of the implementations,  a row-based storage is used, and redo logs are stored in memory.

\begin{figure*}[htb]
        \begin{subfigure}[t]{0.32\textwidth} \centering
                \includegraphics[width=\linewidth]{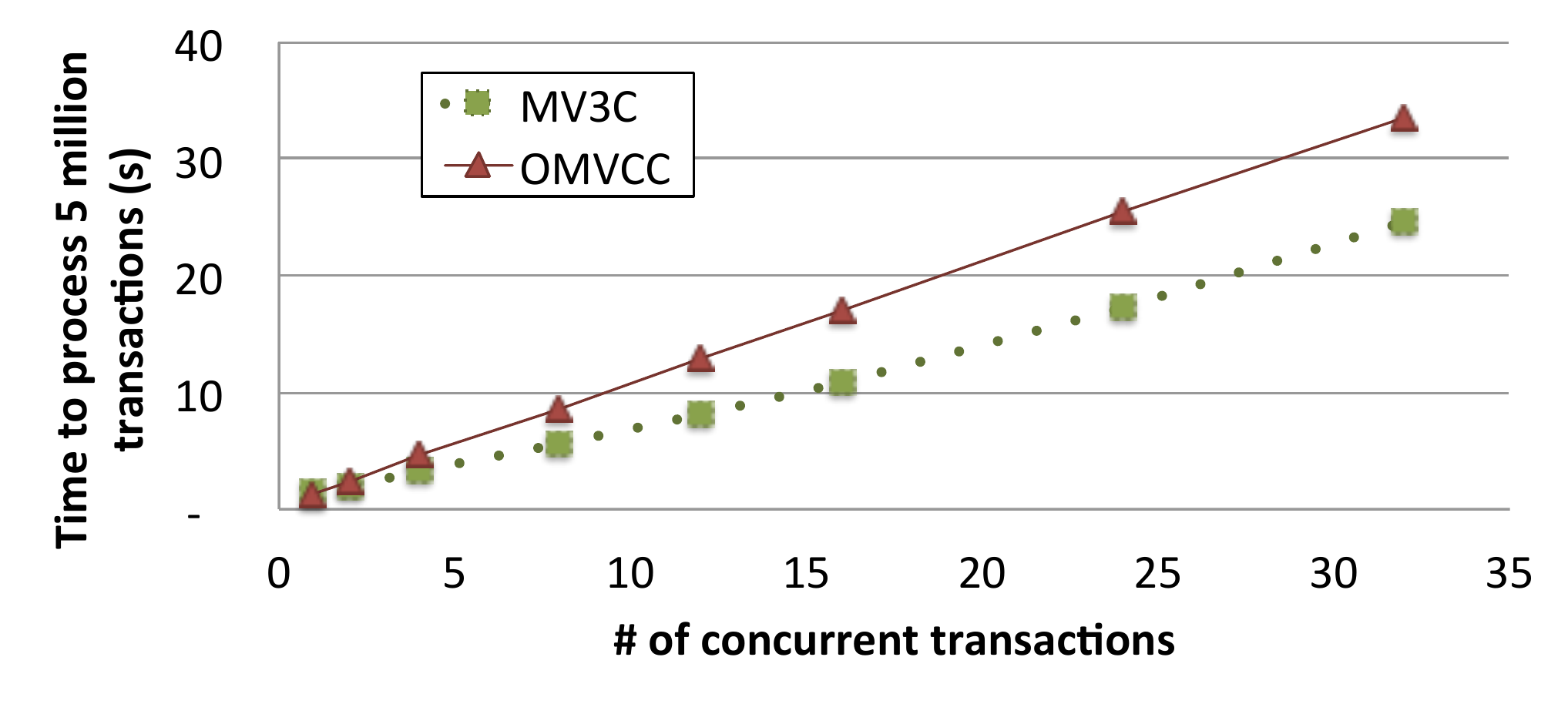}
                \caption{Impact of the number of concurrent transactions with 10 million rows in Account table}
                \label{fig:banking_window}
        \end{subfigure}%
        \hfill
        \begin{subfigure}[t]{0.32\textwidth} \centering
                		\includegraphics[width=\linewidth]{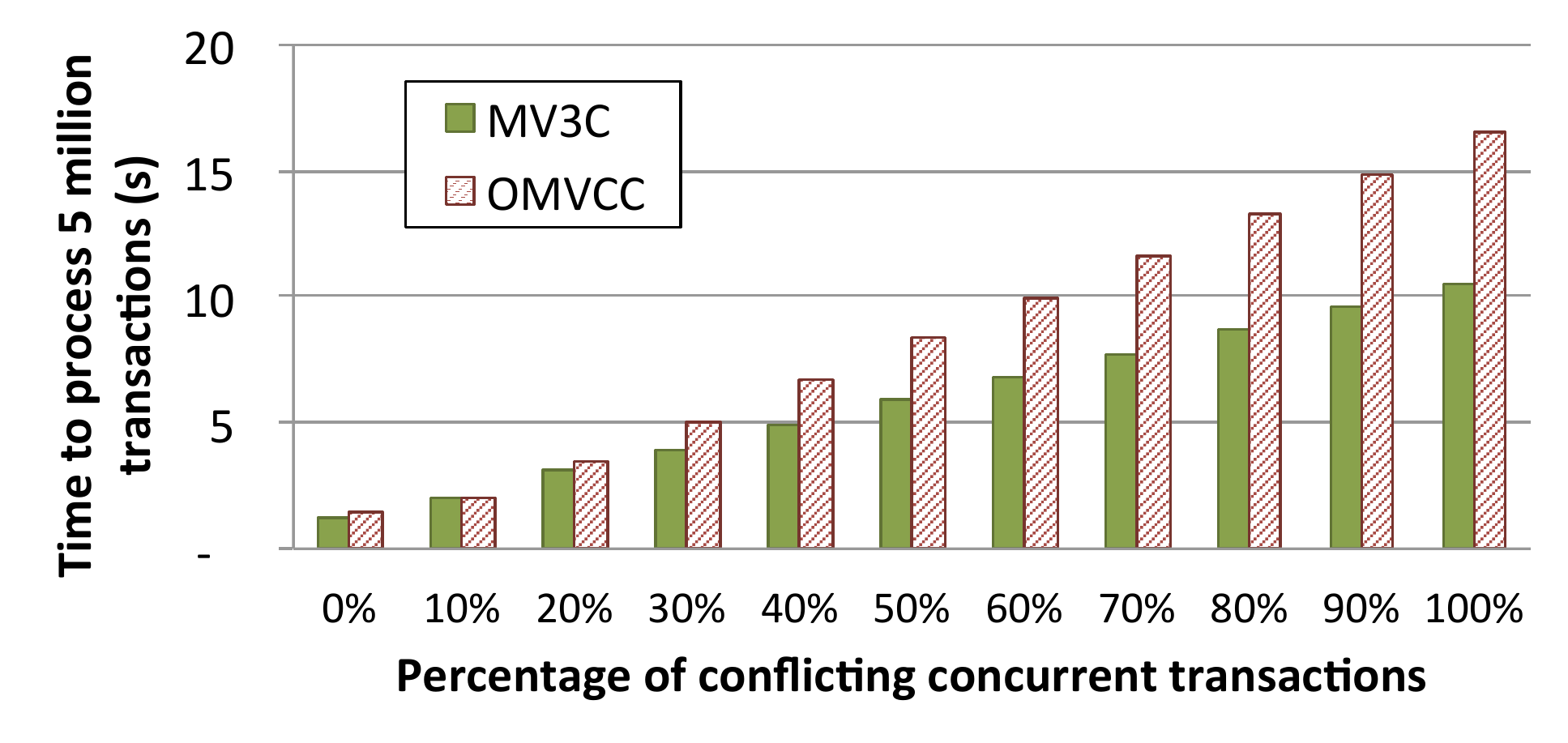}
                \caption{Impact of the percentage of conflicting transactions with 10 million rows in Account table}
                \label{fig:banking_percentage}
        \end{subfigure}%
        \hfill
        \begin{subfigure}[t]{0.32\textwidth} \centering
                \includegraphics[width=\linewidth]{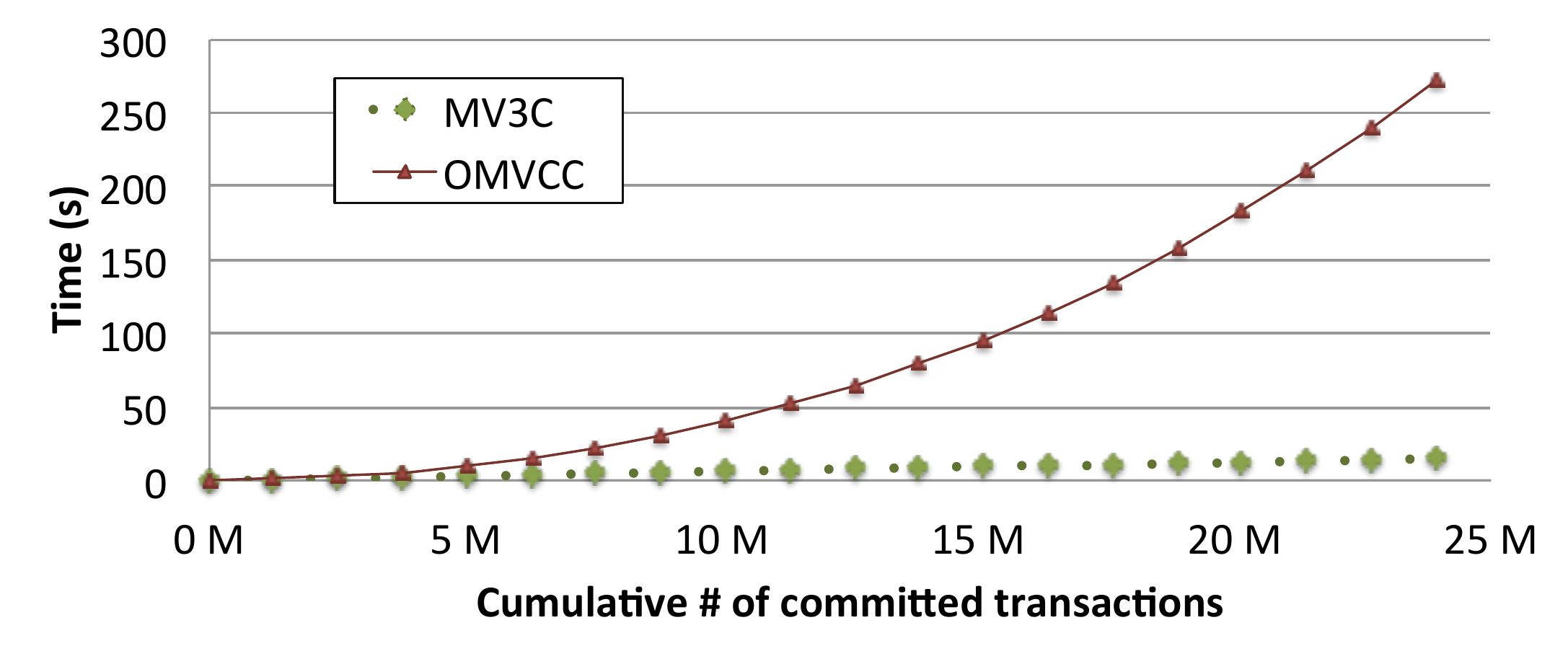}
                \caption{The ripple effect with 100 million rows in Account table}
                \label{fig:banking_ripple}
        \end{subfigure}%
        \vspace*{0.7cm}%
        \caption{Banking example experiments}
        \label{fig:banking_exp}
\end{figure*}

The implementations of both algorithms are single-threaded. The concurrency among transactions is realized in these implementations by dividing programs into smaller pieces and interleaving these pieces with the ones from other programs. The same approach is taken in \cite{neumann15} while evaluating \NMVCC. A program can be divided into at least three pieces: (1) the start transaction command, (2) the logic of the transaction program itself, and (3) the commit transaction command. In addition, the logic of the transaction program can be further divided into smaller pieces. However, the concurrency level only depends on the number of active transactions (i.e., the transactions that started but not committed) rather than the number of interleaved program pieces. In our experiments, unless otherwise stated, we use the notion of {\em window} to control the number of concurrent transactions. Given a window size $N$, which implies having $N$ concurrent transactions, $N$ transactions are picked from the input stream and all of them start, then they execute, and finally they try to validate and commit one after the other. Here, the transactions that fail during the execution are rolled back and moved to the next window. Furthermore, the transactions that fail validation acquire a new timestamp immediately and their executions are moved to the next window. It should be noted that $N=1$ is equivalent to a serial execution of transactions.

All experiments are performed on a dual-socket Intel\textsuperscript{\circledR}  Xeon\textsuperscript{\circledR} CPU E5-2680 2.50GHz (24 physical cores) with 30 MB cache, 256 GB RAM, Ubuntu 14.04.3 LTS, and GCC 4.8.4. Hyper-threading, turbo-boost, and frequency scaling are disabled for more stable results. We run all the benchmarks for 5 times on a simple in-memory database engine and report the average of the last 3 measurements. In all the experiments, the variance of the results was found to be less than 1\% and is omitted from the graphs. 

\subsection{Rollback vs. Repair}

The main advantage of \systemName compared to \NMVCC is repairing a conflicting operation instead of aborting and restarting the transaction. There are different parameters that impact the effectiveness of both \systemName and \NMVCC, such as the number and size of concurrent transactions, the percentage of conflicts among concurrent transactions, and the average number of times that a transaction gets aborted until it commits. In the following subsections, we focus on showing the impact of these parameters on the effectiveness of both \systemName and \NMVCC. For this purpose, the TATP benchmark and  two other mini-benchmarks are used. The first mini-benchmark is  based on our banking example (Example \ref{ex:banking}), and the second one, named {\em Trading}, is described next.

\begin{example} \label{ex:trading}
\pheader{Trading benchmark} This benchmark simulates a simplified trading system, and consists of four tables:
\begin{itemize}
\item Security(\underline{s\_id}, symbol, s\_price): the table of securities available for trading along with their prices. This table has 100,000 records.
\item Customer(\underline{c\_id}, cipher\_key): the table of customers along with their encryption/decryption key, used for secure personal data transfer. This table has 100,000 records.
\item Trade(\underline{t\_id}, t\_encrypted\_data): the table for storing the list of trades done in the system. The trade details are stored in encrypted form using the customer's cipher\_key, and consists of a timestamp of the trade. This table is initially empty.
\item TradeLine(\underline{t\_id, tl\_id}, tl\_encrypted\_data): the table of items ordered in the trades. The details of each record in the last column are encrypted using the customer's cipher\_key and consist of the identifier of an asset, and the traded price. The traded price is negative for a buy order. This table is initially empty.
\end{itemize}

Moreover, there are two transaction programs in this benchmark: TradeOrder and PriceUpdate. A TradeOrder transaction accepts a customer ID (c\_id) and an encrypted payload containing the order details. First, the customer's cipher\_key is read. Next, using the customer's cipher\_key, the payload is decrypted, which contains a sequence number for t\_id, timestamp of the order, and the list of securities along with buy or sell flags for each security. Then, the corresponding securities are read from the Security table so as to get their current prices and then, the respective rows are added to the Trade and the TradeLine tables. A PriceUpdate transaction updates the price of a given security from the Security table.

The instances of these two transaction programs conflict, if a security is requested in a TradeOrder, while concurrently a PriceUpdate is updating its price. In order to simulate the different popularities among securities, the sec\_id input parameters in both transaction programs are generated following Zipf distribution.
\end{example}

\subsubsection{Number of concurrent transactions}
Given a fixed percentage of conflicting transactions in a stream of transactions, if we change the number of concurrent transactions, more conflicts occur. As it was mentioned earlier, we use the notion of {\em window} for controlling the number of concurrent transactions.

\subfigref{trading_exp}{trading_window} shows the results of the Trading benchmark (Example \ref{ex:trading}) with different window sizes. In this experiment, the  distribution parameter $\alpha$ for generating sec\_ids is 1.2 for both TradeOrder and PriceUpdate transactions. As shown in this figure, \systemName processes a fixed number of transactions faster and faster compared to \NMVCC, as the window size increases. \systemName avoids decryption and deserialization of the input data, and re-executes only a small portion of the conflicting TradeOrder transactions, while \NMVCC re-executes the conflicting transactions from scratch. Additionally, PriceUpdate consists of a blind write operation, which does not lead to a conflict in \systemName, but creates a conflict in \NMVCC.

\subfigref{banking_exp}{banking_window} shows the results of a similar experiment for the Banking example (Example \ref{ex:banking}). In this experiment, there are only TransferMoney transactions, all of which conflict at the end while updating the fee account (line \ref{fig:example1:updateFeeLine} in \figref{example1}). The figure shows the effectiveness of \systemName compared to \NMVCC as the time gap for executing 5 million transactions gets wider for bigger windows. However, the behavior is different from the Trading experiment in \subfigref{trading_exp}{trading_window} due to three reasons. First, TransferMoney is a light-weight transaction, and its re-execution in \NMVCC is not significantly more expensive than partial re-execution in \systemName. Second, these conflicts lead to a validation failure in \systemName, while they lead to a premature abort in \NMVCC, without going through the validation. Third, beyond a certain number of concurrent transactions, the performance of \systemName deteriorates due to the effect of increased version chain length. In \NMVCC, the version chain length of the data object identified by $FEE\_ACC\_ID$ is at most one, due to write-write conflicts. Whereas in \systemName, as write-write conflicts are handled, there are as many uncommitted versions as the number of concurrent transactions, all of which have to be traversed until the visible version is found.

\begin{figure}[htb]
\begin{center}
\leavevmode
\includegraphics[width=\columnwidth]{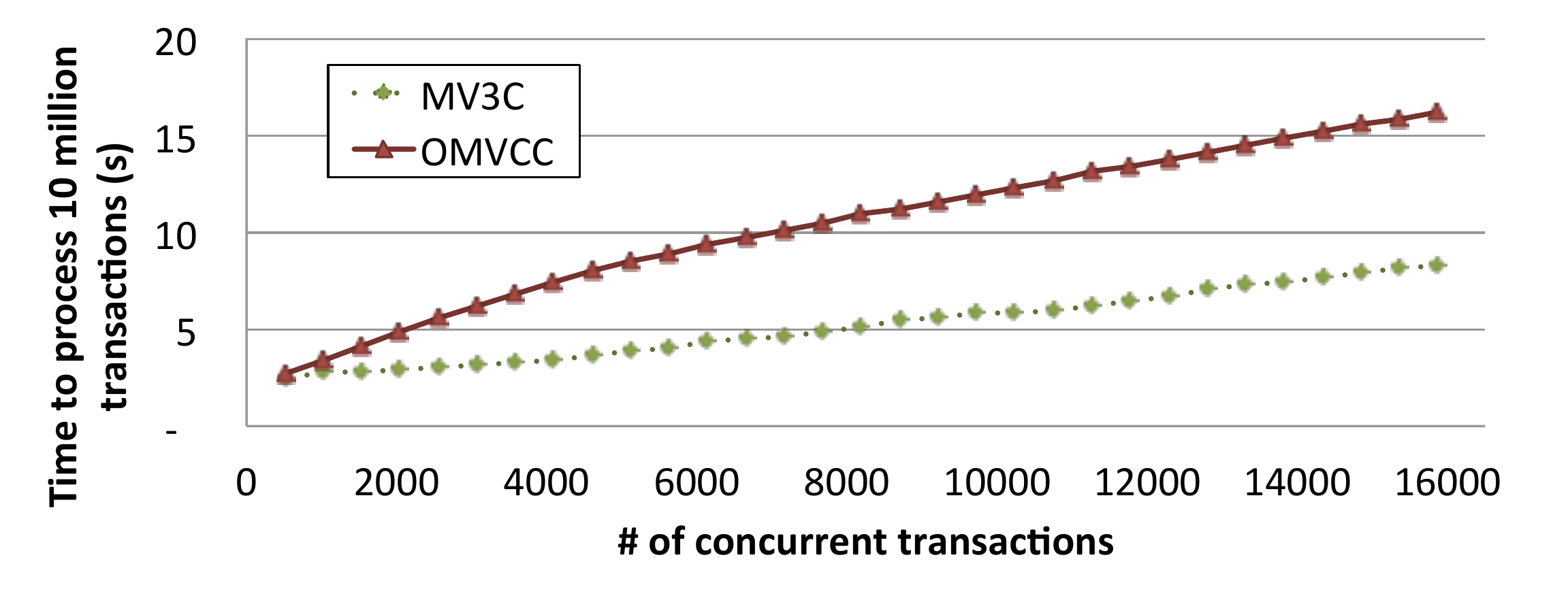}
\end{center}
\vspace*{-0.2cm}
\caption{TATP experiment	with scale factor 1}
\label{fig:tatp}
\end{figure}

In addition, we have implemented the TATP benchmark in both \systemName and \NMVCC. \figref{tatp} shows the results of executing 10 million TATP transactions with different window sizes. This experiment is done for scale factor 1 with non-uniform key distribution, and attribute-level validation is used. The input data for the experiment is generated using OLTPBench \cite{oltpbench}. Once again, this experiment confirms the increased effectiveness of \systemName compared to \NMVCC, as the window size increases. It should be noted that even with non-uniform key distribution, the number of conflicting transactions for small window sizes is low, as 80\% of the workload consists of read-only transactions. Thus, both \systemName and \NMVCC show similar results for small window sizes, and the difference can only be seen in bigger windows. Moreover, as transaction programs in the TATP benchmark are very small, the main advantage of \systemName compared to \NMVCC is that the former optionally accepts write-write conflicts, while the latter prematurely aborts after a write-write conflict. This decision leads to having no conflicts among Update\_Location transaction instances in \systemName, while it leads to aborts in \NMVCC.

\subsubsection{Percentage of conflicts among transactions}

For a fixed number of concurrent transactions, the percentage of conflicting transactions determines the success rate in a window. If this percentage is 0\%, then there are no conflicts. On the other hand, if this percentage is 100\%, only one transaction  succeeds in each window, the rest fail, and are moved to the next window. Consequently, it is expected that \systemName is more effective compared to \NMVCC when the percentage of conflicting transactions is higher.

One experiment to illustrate the impact of the percentage of conflicting transactions uses the Banking example. Another program, called NoFeeTransferMoney, is introduced which is similar to TransferMoney, but without the fee. In this experiment, the window size is fixed to be 16, and the percentage of NoFeeTransferMoney programs is varied keeping the total number of programs fixed. \subfigref{banking_exp}{banking_percentage} shows that \systemName is more effective compared to \NMVCC as the percentage of conflicting transactions increases.

As another experiment, different $\alpha$ parameters of the Zipf distribution of s\_id input parameters in Trading benchmark (Example \ref{ex:trading}) are used with the fixed windows size of 32. The $\alpha$ parameter in this experiment determines the percentage of conflicting transactions. The results of this experiment are illustrated in \subfigref{trading_exp}{trading_percentage}. Once again, this figure shows the desired behavior for \systemName, as the processing time of 5 million transactions becomes significantly lower compared to \NMVCC, as the percentage of conflicting transactions is increased by a larger $\alpha$ parameter.

\subsubsection{Ripple Effect}

The time saved  by repairing the transactions instead of aborting and restarting them from scratch can be used to reduce the load on the system. To show this effect, we have designed an experiment where there are two streams issuing transactions at a constant rate. The first stream issues its transactions at almost the transaction processing rate, while the second stream issues its transactions at a much slower pace.

The results of one such experiment is shown in \subfigref{banking_exp}{banking_ripple}. In this experiment, the TransferMoney transaction program from the banking example (Example \ref{ex:banking}) is used in both streams. The schedule for this experiment is generated using logical time units. The time taken for executing one TransferMoney is assumed to be 250 units for both algorithms.  We choose three quarters of this time, i.e., 187 units as the time for partially re-executing conflicting blocks  in \systemName, based on the results of \subfigref{banking_exp}{banking_window}. The two streams produce TransferMoney transactions at the rate of 251 units and 72,000,000 units respectively. This figure shows not only that the time for processing \systemName transactions is lower, but also that the overall behavior of \systemName is completely different from \NMVCC. Basically, this experiment shows that a longer conflict resolution approach not only affects an individual transaction, but also has a compound effect on the subsequent transactions. The reason behind it is an increase in the probability of having many concurrent transactions, which results in even more conflicts.

\subsection{Overhead of \systemName}

Using \systemName as a generic MVCC algorithm is reasonable, if its overhead in the absence of validation failures, or premature aborts during the execution is not significant. The following subsections show the overhead of \systemName in these cases.

\subsubsection{Conflict-free transactions}

In this section, the overhead imposed by using \systemName in the absence of any conflict is illustrated. This conflict-free execution is highly probable in practical low contention scenarios. We consider two approaches for observing this behavior. The first approach is executing transactions serially, and the second one is running transactions concurrently with no conflicts among them.

As it is illustrated in Figures \subfigrefnum{banking_exp}{banking_window} and \subfigrefnum{trading_exp}{trading_window}, the overhead of \systemName compared to \NMVCC in the serial execution scenario, i.e., window size 1, is under 1\%. In addition, \subfigref{banking_exp}{banking_percentage} shows that the concurrent execution of transactions with no conflicts has less than 1\% overhead using \systemName compared to \NMVCC. This small overhead is mostly related to creating the predicate graph in \systemName, instead of creating a list of predicates as in \NMVCC. It should be noted that applying the compiler optimizations for efficiently compiling the closures bound to predicates in \systemName plays an important role in achieving such a low overhead.

\subsubsection{TPC-C benchmark}

\begin{figure}[htb]
\begin{center}
\leavevmode
\includegraphics[width=\columnwidth]{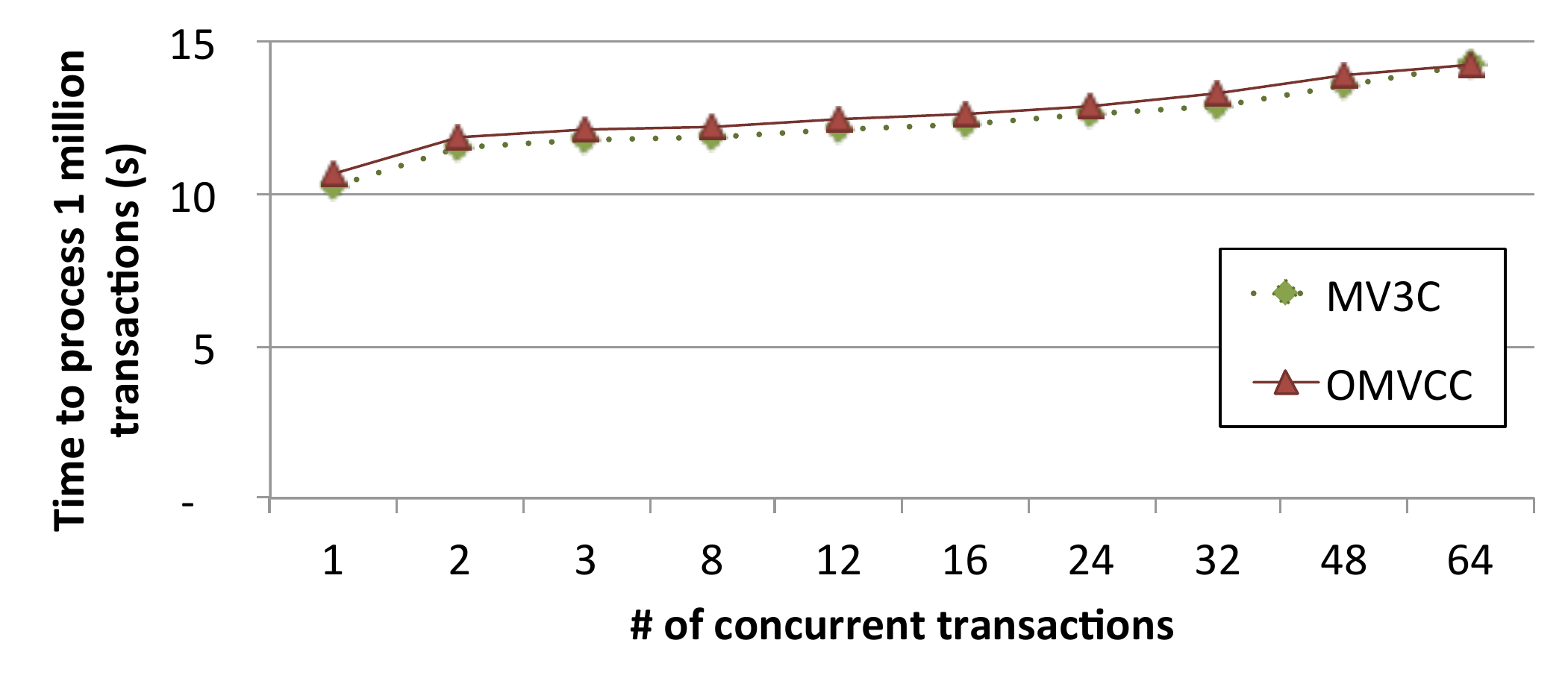}
\end{center}
\vspace*{-0.2cm}
\caption{TPC-C experiment with 1 warehouse}
\label{fig:tpcc}
\end{figure}

The TPC-C benchmark is implemented using both \systemName and \NMVCC with attribute-level predicate validation. The results of this benchmark with scale factor 1, i.e., with one warehouse, are shown in \figref{tpcc}. As it is shown in this figure, the execution times of \systemName and \NMVCC for one million TPC-C transactions are almost the same for different window sizes, even though 92\% percent of TPC-C transactions are not read-only \cite{tozun13}. The reason is that in this scenario, almost all conflicting transactions lead to premature abort during execution, instead of reaching the validation phase and failing it. Consequently, the main advantage of \systemName compared to \NMVCC, i.e., the repair phase, rarely happens. Even though \figref{tpcc} shows that \systemName has almost no advantage in this benchmark, it shows that it has no overhead either.

\bibliographystyle{abbrv}
\small
\bibliography{refs}

\end{document}